\providecommand{\tabularnewline}{\\}
\begin{document}

\title{Constructing a quantum field theory from spacetime }

\author{Torsten Asselmeyer-Maluga%
\thanks{German Aerospace center, Rutherfordstr. 2, 12489 Berlin, torsten.asselmeyer-maluga@dlr.de%
} and Jerzy Kr\'ol%
\thanks{University of Silesia, Institute of Physics, ul. Uniwesytecka 4, 40-007
Katowice, iriking@wp.pl %
}}
\maketitle
\begin{abstract}
The paper shows deep connections between exotic smoothings of a small
$\mathbb{R}^{4}$ (the spacetime), the leaf space of codimension-1
foliations (related to noncommutative algebras) and quantization.
At first we relate a small exotic $\mathbb{R}^{4}$ to codimension-1
foliations of the 3-sphere unique up to foliated cobordisms and characterized
by the real-valued Godbillon-Vey invariant. Special care is taken
for the integer case which is related to flat $PSL(2,\mathbb{R})-$bundles.
Then we discuss the leaf space of the foliation using noncommutative
geometry. This leaf space contains the hyperfinite $I\! I\! I_{1}$
factor of Araki and Woods important for quantum field theory (QFT)
and the $I_{\infty}$ factor. Using Tomitas modular theory, one obtains
a relation to a factor $I\! I_{\infty}$ algebra given by the horocycle
foliation of the unit tangent bundle of a surface $S$ of genus $g>1$.
The relation to the exotic $\mathbb{R}^{4}$ is used to construct
the (classical) observable algebra as Poisson algebra of functions
over the character variety of representations of the fundamental group
$\pi_{1}(S)$ into the $SL(2,\mathbb{C})$. The Turaev-Drinfeld quantization
(as deformation quantization) of this Poisson algebra is a (complex)
skein algebra which is isomorphic to the hyperfinite factor $I\! I_{1}$
algebra determining the factor $I\! I_{\infty}=I\! I_{1}\otimes I_{\infty}$
algebra of the horocycle foliation. Therefore our geometrically motivated
hyperfinite $I\! I\! I_{1}$ factor algebra comes from the quantization
of a Poisson algebra. Finally we discuss the states and operators
to be knots and knot concordances, respectively. 
\end{abstract}
\tableofcontents{}

\section{Introduction}

The construction of quantum theories from classical theories, known
as quantization, has a long and difficult history. It starts with
the discovery of quantum mechanics in 1925 and the formalization of
the quantization procedure by Dirac and von Neumann. The construction
of a quantum theory from a given classical one is highly non-trivial
and non-unique. But except for few examples, it is the only way which
will be gone today. From a physical point of view, the world surround
us is the result of an underlying quantum theory of its constituent
parts. So, one would expect that we must understand the transition
from the quantum to the classical world. But we had developed and
tested successfully the classical theories like mechanics or electrodynamics.
Therefore one tried to construct the quantum versions out of classical
theories. In this paper we will go the other way to obtain a quantum
field theory by geometrical methods and to show its equivalence to
a quantization of a classical Poisson algebra. 

The main technical tool will be the noncommutative geometry developed
by Connes \cite{Con:85}. Then intractable space like the leaf space
of a foliation can be described by noncommutative algebras. From the
physical point of view, we have now an interpretation of noncommutative
algebras (used in quantum theory) in a geometrical context. So, we
need only an idea for the suitable geometric structure. For that purpose
one formally considers the path integral over spacetime geometries.
In the evaluation of this integral, one has to include the possibility
of different smoothness structures for spacetime \cite{Pfeiffer2004,Ass2010}.
Brans \cite{BraRan:93,Bra:94a,Bra:94b} was the first who considered
exotic smoothness also on open smooth 4-manifolds as a possibility
for space-time. He conjectured that exotic smoothness induces an additional
gravitational field (\emph{Brans conjecture}). The conjecture was
established by Asselmeyer \cite{Ass:96} in the compact case and by
S{\l{}}adkowski \cite{Sladkowski2001} in the non-compact case. S{\l{}}adkowski
\cite{Sla:96,Sla:96b,Sla:96c} discussed the influence of differential
structures on the algebra $C(M)$ of functions over the manifold $M$
with methods known as non-commutative geometry. Especially in \cite{Sla:96b,Sla:96c}
he stated a remarkable connection between the spectra of differential
operators and differential structures. But there is a big problem
which prevents progress in the understanding of exotic smoothness
especially for the $\mathbb{R}^{4}$: there is no known explicit coordinate
representation. As the result no exotic smooth function on any such
$\mathbb{R}^{4}$ is known even though there exist families of infinite
continuum many different non diffeomorphic smooth $\mathbb{R}^{4}$.
This is also a strong limitation for the applicability to physics
of non-standard open 4-smoothness. Bizaca \cite{Biz:94} was able
to construct an infinite coordinate patch by using Casson handles.
But it still seems hopeless to extract physical information from that
approach.

This situation is not satisfactory but we found a possible solution.
The solution is a careful analysis of the small exotic $\mathbb{R}^{4}$
by using foliation theory (see next section) to derive a relation
between exotic smoothness and codimension-1 foliations in section
\ref{sec:Exotic-R4-codim-1-foliation} (see Theorem \ref{thm:codim-1-foli-radial-fam}).
By using noncommutative geometry, this approach is able to produce
a von Neumann algebra via the leaf space of the foliation which can
be interpreted as the observable algebra of some QFT (see \cite{Haag:96}).
Fortunately, our approach to exotic smoothness is strongly connected
with a codimension-1 foliation of type $I\! I\! I$, i.e. the leaf
space is a factor $I\! I\! I_{1}$ von Neumann algebra. Especially
this algebra is the preferred algebra in the local algebra approach
to QFT \cite{Haag:96,Borchers2000}. Recently, this factor $I\! I\! I$
case was also discussed in connection with quantum gravity (via the
spectral triple of Connes) \cite{BertozziniConti2010}. 

In the next two sections we will give an overview about foliation
theory, its operator-theoretical description and the relation to exotic
smoothness. Both sections are rather technical with a strong overlap
to our previous paper \cite{AsselmeyerKrol2009}. In section \ref{sec:Quantization}
we turn to the quantization procedure as related to nonstandard smoothings
of $\mathbb{R}^{4}$. Based on the dictionary between operator algebra
and foliations one has the corresponding relation of small exotic
$\mathbb{R}^{4}$'s and operator algebras. This is a noncommutative
$C^{\star}$ algebra which can be seen as the algebra of quantum observables
of some theory. 
\begin{itemize}
\item First, in subsection \ref{sub:factor-III-case} we recognized the
algebra as the hyperfinite factor $I\! I\! I_{1}$ von Neumann algebra.
From Tomita-Takesaki theory it follows that any factor $I\! I\! I$
algebra $M$ decomposes as a crossed product into $M=N\rtimes_{\theta}\mathbb{R}_{+}^{*}$
where $N$ is a factor $I\! I_{\infty}$. Via Connes procedure one
can relate the factor $I\! I\! I$ foliation to a factor $I\! I$
foliation. Then we obtain a foliation of the horocycle flow on the
unit tangent bundle over some genus $g$ surface which determines
the factor $I\! I_{\infty}$. This foliation is in fact determined
by the horocycles which are closed circles.
\item Next we are looking for a classical algebraic structure which would
give the above mentioned noncommutative algebra of observables as
a result of quantization. The classical structure is recovered by
the idempotent of the $C^{\star}$ algebra and has the structure of
a Poisson algebra. The idempotents were already constructed in subsection
\ref{sub:smooth-holonomy-groupoid} as closed curves in the leaf of
the foliation of $S^{3}$. As noted by Turaev \cite{Turaev1991},
closed curves in a surface induce a Poisson algebra: Given a surface
$S$ let $X(S,G)$ be the space of flat connections of $G=SL(2,\mathbb{C})$
bundles on $S$; this space carries a Poisson structure as is shown
in subsection \ref{sub:The-observable-algebra}. The complex functions
on $X(S,SL(2,\mathbb{C}))$ can be considered as the algebra of classical
observables forming the Poisson algebra $(X(S,SL(2,\mathbb{C})),\left\{ \,,\,\right\} )$. 
\item Next in the subsection \ref{sub:Drinfeld-Turaev-Quantization} we
find a quantization procedure of the above Poisson algebra which is
the Drinfeld-Turaev deformation quantization. It is shown that the
result of this quantization is the skein algebra $(K_{t}(S),[\,,\,])$
for the deformation parameter $t=\exp(h/4)$ ($t=-1$ corresponds
to the commutative Poisson structure on $(X(S,SL(2,\mathbb{C})),\{\,,\,\})$). 
\item This skein algebra is directly related to the factor $I\! I\! I_{1}$
von Neumann algebra derived from the foliation of $S^{3}$. In fact
the skein algebra is constructed in subsection \ref{sub:Temperley-Lieb-algebra-foliation}
as the factor $I\! I_{1}$ algebra Morita equivalent to the factor
$I\! I_{\infty}$ which in turn determines the factor $I\! I\! I_{1}$
of the foliation. 
\end{itemize}
Finally in section \ref{sec:Knots-as-states} we discuss the states
of the algebra and the operators between states. Here, we present
only the ideas: the states are knots represented by holonomies along
a flat connection. Then an operator between two states is a knot concordance
(a kind of knot cobordism). The whole approach is similar to the holonomy
flux algebra of Loop quantum gravity (see \cite{Thiemann2006}). We
will discuss this interesting relation in our forthcoming work.

\section{Preliminaries: Foliations and Operator algebras}

In this section we will consider a foliation $(M,F)$ of a manifold
$M$, i.e. an integrable subbundle $F\subset TM$ of the tangent bundle
$TM$. The leaves $L$ of the foliation $(M,F)$ are the maximal connected
submanifolds $L\subset M$ with $T_{x}L=F_{x}\:\forall x\in L$. We
denote with $M/F$ the set of leaves or the leaf space. Now one can
associate to the leaf space $M/F$ a $C^{*}$algebra $C(M,F)$ by
using the smooth holonomy groupoid $G$ of the foliation (see Connes
\cite{Connes1984}). For a codimension-1 foliation of a 3-manifold
$M$ there is the Godbillon-Vey invariant \cite{GodVey:71} as element
of $H^{3}(M,\mathbb{R})$. As example we consider the construction
of a codimension-1 foliation of the 3-sphere by Thurston \cite{Thu:72}
which will be used extensively in the paper. This foliation has a
non-trivial Godbillon-Vey invariant where every element of $H^{3}(S^{3},\mathbb{R})$
is represented by a cobordism class of foliations. Hurder and Katok
\cite{HurKat:84} showed that the $C^{*}$algebra of a foliation with
non-trivial Godbillon-Vey invariant contains a factor $I\! I\! I$
subalgebra. In the following we will construct this $C^{*}$algebra
and discuss the factor $I\! I\! I$ case.

\subsection{Definition of Foliations and foliated cobordisms}

A codimension $k$ foliation%
\footnote{In general, the differentiability of a foliation is very important.
Here we consider the smooth case only. %
} of an $n$-manifold $M^{n}$ (see the nice overview article \cite{Law:74})
is a geometric structure which is formally defined by an atlas $\left\{ \phi_{i}:U_{i}\to M^{n}\right\} $,
with $U_{i}\subset\mathbb{R}^{n-k}\times\mathbb{R}^{k}$, such that
the transition functions have the form \[
\phi_{ij}(x,y)=(f(x,y),g(y)),\,\left[x\in\mathbb{R}^{n-k},y\in\mathbb{R}^{k}\right]\quad.\]
Intuitively, a foliation is a pattern of $(n-k)$-dimensional stripes
- i.e., submanifolds - on $M^{n}$, called the leaves of the foliation,
which are locally well-behaved. The tangent space to the leaves of
a foliation $\mathcal{F}$ forms a vector bundle over $M^{n}$, denoted
$T\mathcal{F}$. The complementary bundle $\nu\mathcal{F}=TM^{n}/T\mathcal{F}$
is the normal bundle of $\mathcal{F}$. Such foliations are called
regular in contrast to singular foliations or Haefliger structures.
For the important case of a codimension-1 foliation we need an overall
non-vanishing vector field or its dual, an one-form $\omega$. This
one-form defines a foliation iff it is integrable, i.e.\[
d\omega\wedge\omega=0\]
and the leaves are the solutions of the equation $\omega=const.$

Now we will discuss an important equivalence relation between foliations,
cobordant foliations. Let $M_{0}$ and $M_{1}$ be two closed, oriented
$m$-manifolds with codimension-$q$ foliations. Then these foliated
manifolds are said to be \emph{foliated cobordant} if there is a compact,
oriented $(m+1)$-manifold with boundary $\partial W=M_{0}\sqcup\overline{M}_{1}$
and with a codimension-$q$ foliation transverse to the boundary and
inducing the given foliation there. The resulting foliated cobordism
classes form a group under disjoint union.

\subsection{Non-cobordant foliations of $S^{3}$ detected by the Godbillon-Vey
class \label{sub:Non-cobordant-foliations-S3}}

In \cite{Thu:72}, Thurston constructed a foliation of the 3-sphere
$S^{3}$ depending on a polygon $P$ in the hyperbolic plane $\mathbb{H}^{2}$
so that two foliations are non-cobordant if the corresponding polygons
have different areas. For later usage, we will present this construction
now (see also the book \cite{Tamura1992} chapter VIII for the details).

Consider the hyperbolic plane $\mathbb{H}^{2}$ and its unit tangent
bundle $T_{1}\mathbb{H}^{2}$ , i.e the tangent bundle $T\mathbb{H}^{2}$
where every vector in the fiber has norm $1$. Thus the bundle $T_{1}\mathbb{H}^{2}$
is a $S^{1}$-bundle over $\mathbb{H}^{2}$. There is a foliation
$\mathcal{F}$ of $T_{1}\mathbb{H}^{2}$ invariant under the isometries
of $\mathbb{H}^{2}$ which is induced by bundle structure and by a
family of parallel geodesics on $\mathbb{H}^{2}$. The foliation $\mathcal{F}$
is transverse to the fibers of $T_{1}\mathbb{H}^{2}$. Let $P$ be
any convex polygon in $\mathbb{H}^{2}$. We will construct a foliation
$\mathcal{F}_{P}$ of the three-sphere $S^{3}$ depending on $P$.
Let the sides of $P$ be labeled $s_{1},\ldots,s_{k}$ and let the
angles have magnitudes $\alpha_{1},\ldots,\alpha_{k}$. Let $Q$ be
the closed region bounded by $P\cup P'$, where $P'$ is the reflection
of $P$ through $s_{1}$. Let $Q_{\epsilon}$, be $Q$ minus an open
$\epsilon$-disk about each vertex. If $\pi:T_{1}\mathbb{H}^{2}\to\mathbb{H}^{2}$
is the projection of the bundle $T_{1}\mathbb{H}^{2}$, then $\pi^{-1}(Q)$
is a solid torus $Q\times S^{1}$(with edges) with foliation $\mathcal{F}_{1}$
induced from $\mathcal{F}$. For each $i$, there is an unique orientation-preserving
isometry of $\mathbb{H}^{2}$, denoted $I_{i}$, which matches $s_{i}$
point-for-point with its reflected image $s'_{i}$. We glue the cylinder
$\pi^{-1}(s_{i}\cap Q_{\epsilon})$ to the cylinder $\pi^{-1}(s'_{i}\cap Q_{\epsilon})$
by the differential $dI_{i}$ for each $i>1$, to obtain a manifold
$M=(S^{2}\setminus\left\{ \mbox{\mbox{k} punctures}\right\} )\times S^{1}$,
and a (glued) foliation $\mathcal{F}_{2}$, induced from $\mathcal{F}_{1}$.
To get a complete $S^{3}$, we have to glue-in $k$ solid tori for
the $k$ $S^{1}\times\mbox{punctures}.$ Now we choose a linear foliation
of the solid torus with slope $\alpha_{k}/\pi$ (Reeb foliation).
Finally we obtain a smooth codimension-1 foliation $\mathcal{F}_{P}$
of the 3-sphere $S^{3}$ depending on the polygon $P$.

Now we consider two codimension-1 foliations $\mathcal{F}_{1},\mathcal{F}_{2}$
depending on the convex polygons $P_{1}$ and $P_{2}$ in $\mathbb{H}^{2}$.
As mentioned above, these foliations $\mathcal{F}_{1},\mathcal{F}_{2}$
are defined by two one-forms $\omega_{1}$ and $\omega_{2}$ with
$d\omega_{a}\wedge\omega_{a}=0$ and $a=0,1$. Now we define the one-forms
$\theta_{a}$ as the solution of the equation\[
d\omega_{a}=-\theta_{a}\wedge\omega_{a}\]
and consider the closed 3-form\begin{equation}
\Gamma_{\mathcal{F}_{a}}=\theta_{a}\wedge d\theta_{a}\label{eq:Godbillon-Vey-class}\end{equation}
 associated to the foliation $\mathcal{F}_{a}$. As discovered by
Godbillon and Vey \cite{GodVey:71}, $\Gamma_{\mathcal{F}}$ depends
only on the foliation $\mathcal{F}$ and not on the realization via
$\omega,\theta$. Thus $\Gamma_{\mathcal{F}}$, the \emph{Godbillon-Vey
class}, is an invariant of the foliation. Let $\mathcal{F}_{1}$ and
$\mathcal{F}_{2}$ be two cobordant foliations then $\Gamma_{\mathcal{F}_{1}}=\Gamma_{\mathcal{F}_{2}}$.
In case of the polygon-dependent foliations $\mathcal{F}_{1},\mathcal{F}_{2}$,
Thurston \cite{Thu:72} obtains\[
\Gamma_{\mathcal{F}_{a}}=vol(\pi^{-1}(Q))=4\pi\cdot Area(P_{a})\]
and thus
\begin{itemize}
\item $\mathcal{F}_{1}$ is cobordant to $\mathcal{F}_{2}$ $\Longrightarrow$$Area(P_{1})=Area(P_{2})$
\item $\mathcal{F}_{1}$ and $\mathcal{F}_{2}$ are non-cobordant $\Longleftrightarrow$$Area(P_{1})\not=Area(P_{2})$
\end{itemize}
We note that $Area(P)=(k-2)\pi-\sum_{k}\alpha_{k}$. The Godbillon-Vey
class is an element of the deRham cohomology $H^{3}(S^{3},\mathbb{R})$
which will be used later to construct a relation to gerbes. Furthermore
we remark that the classification is not complete. Thurston constructed
only a surjective homomorphism from the group of cobordism classes
of foliation of $S^{3}$ into the real numbers $\mathbb{R}$. We remark
the close connection between the Godbillon-Vey class (\ref{eq:Godbillon-Vey-class})
and the Chern-Simons form if $\theta$ can be interpreted as connection
of a suitable line bundle.

\subsection{Codimension-one foliations on 3-manifolds}

Now we will discuss the general case of a compact 3-manifold. Later
on we will need the codimension-1 foliations of a homology 3-sphere
$\Sigma$. Because of the diffeomorphism $\Sigma\#S^{3}=\Sigma$,
we can relate a foliation on $\Sigma$ to a foliation on $S^{3}$.
By using the surgery along a knot or link, we are able to construct
the codimension-one foliation for every compact 3-manifold.
\begin{theorem}
\label{thm:foliation-3MF}Given a compact 3-manifold $\Sigma$ without
boundary. Every codimension-one foliation $\mathcal{F}$ of the 3-sphere
$S^{3}$ (constructed above) induces a codimension-one foliation $\mathcal{F}_{\Sigma}$
on $\Sigma$. For every cobordism class $[\mathcal{F}]$ as element
of the deRham cohomology $H^{3}(S^{3},\mathbb{R})$, there exists
an element of $H^{3}(\Sigma,\mathbb{R})$ with a cobordism class $[\mathcal{F}_{\Sigma}]$.\end{theorem}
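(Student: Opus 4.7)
The plan is to realize $\Sigma$ by Dehn surgery on a framed link in $S^{3}$ via the Lickorish--Wallace theorem and to transport Thurston's foliation across the surgery by exploiting the Reeb-foliated solid tori already built into the construction of $\mathcal{F}_{P}$ in Section \ref{sub:Non-cobordant-foliations-S3}. The cobordism statement will then follow from the invariance of the Godbillon--Vey class together with the fact that $H^{3}(\Sigma,\mathbb{R})\cong\mathbb{R}$ by Poincar\'e duality for any closed, connected, orientable 3-manifold.

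More concretely, first I would fix $\mathcal{F}=\mathcal{F}_{P}$ and recall its decomposition $S^{3}=\pi^{-1}(Q_{\epsilon})\cup V_{1}\cup\cdots\cup V_{k}$, where the $V_{i}$ are solid tori carrying Reeb foliations with slopes $\alpha_{i}/\pi$. By Lickorish--Wallace there is a framed link $L=L_{1}\sqcup\cdots\sqcup L_{m}\subset S^{3}$ with $\Sigma\cong\chi(S^{3},L)$. After an ambient isotopy I may assume that each $L_{j}$ sits as the core of a thinner Reeb sub-solid-torus nested inside one of the $V_{i}$, using that a Reeb-foliated solid torus contains a nested family of parallel Reeb sub-tori. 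One then removes open tubular neighborhoods $N(L_{j})$ and reglues solid tori $V_{j}^{\prime}$ by the surgery framings. Because a Reeb foliation on a solid torus can be constructed with an arbitrary prescribed linear foliation on the boundary $T^{2}$, I can extend $\mathcal{F}|_{S^{3}\setminus N(L)}$ across each $V_{j}^{\prime}$ by a Reeb foliation whose boundary slope matches the framing, after a small smoothing in a collar. This produces the desired codimension-1 foliation $\mathcal{F}_{\Sigma}$.

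For the second assertion, I would show that cobordant foliations $\mathcal{F}\sim\mathcal{F}^{\prime}$ of $S^{3}$ induce cobordant foliations $\mathcal{F}_{\Sigma}\sim\mathcal{F}_{\Sigma}^{\prime}$: a cobordism $(W,\mathcal{G})$ between them restricts over $(S^{3}\setminus N(L))\times[0,1]$ and one fills in the complement by $V_{j}^{\prime}\times[0,1]$ equipped with a product Reeb foliation, yielding a foliated cobordism in $\Sigma\times[0,1]$. Thus $[\mathcal{F}_{\Sigma}]\in H^{3}(\Sigma,\mathbb{R})$ is well defined and depends only on $[\mathcal{F}]$. Since the Godbillon--Vey invariant of Thurston's family spans $H^{3}(S^{3},\mathbb{R})\cong\mathbb{R}$ and both source and target cohomologies are one-dimensional, the induced assignment $[\mathcal{F}]\mapsto[\mathcal{F}_{\Sigma}]$ realises every element of $H^{3}(\Sigma,\mathbb{R})$ as a cobordism class of the form $[\mathcal{F}_{\Sigma}]$ by varying the area of $P$.

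The main obstacle I expect is the compatibility of the surgery framing with the Reeb slopes: one has to verify that the foliation on $S^{3}\setminus N(L)$ together with a Reeb foliation on each $V_{j}^{\prime}$ can be smoothly joined along the surgery tori $\partial N(L_{j})$ for \emph{arbitrary} rational surgery coefficients. This requires both the flexibility of the Reeb construction (any linear foliation of $T^{2}$ arises as the boundary of some Reeb foliation of a solid torus) and a careful interpolation in a bicollar of $\partial N(L_{j})$ between two different linear foliations of $T^{2}$. A secondary subtlety is that the link may not lie inside $\bigcup V_{i}$ for a given polygon $P$; this is resolved by first isotoping $L$ and then enlarging (or refining) the polygon so that the Reeb region contains $L$.
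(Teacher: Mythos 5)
Your overall strategy (realize $\Sigma$ by Lickorish--Wallace surgery on a framed link $L\subset S^{3}$, carry $\mathcal{F}_{P}$ across the surgery, and control the cobordism class via the Godbillon--Vey form) is the same route the paper takes. However there is a genuine gap in the positioning step. You write that after an ambient isotopy ``each $L_{j}$ sits as the core of a thinner Reeb sub-solid-torus nested inside one of the $V_{i}$.'' The Reeb tori $V_{i}$ in Thurston's construction are a fixed finite collection of embedded solid tori, and a sub-solid-torus nested inside $V_{i}$ has core isotopic to a cable of the core of $V_{i}$. So this step forces each $L_{j}$ to be a cable of a fixed unknotted curve, which is impossible for a generic framed link: surgery presentations of most $\Sigma$ require knotted and mutually linked components, and these cannot be isotoped so as to become cores of concentric sub-tori inside the given $V_{i}$, no matter how you ``enlarge or refine'' $P$. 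Enlarging $P$ changes the number and size of the Reeb regions but not their isotopy type relative to $L$; the obstruction is the knot type of $L$, not the size of $\bigcup V_{i}$.

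The standard repair is to modify the foliation $\mathcal{F}$ rather than the link: isotope $L$ to be transverse to $\mathcal{F}$, then turbulize $\mathcal{F}$ in a tubular neighbourhood $N(L_{j})$ of each component so that $\partial N(L_{j})$ becomes a torus leaf and the interior is a Reeb component with core $L_{j}$. This is always possible, and because Reeb components contribute trivially to the Godbillon--Vey form, the turbulized foliation has the same $GV$ class as $\mathcal{F}$. Now one may excise the new Reeb tori, reglue by the surgery framings, and refill with Reeb foliations exactly as you describe; the interpolation of the two linear foliations of $T^{2}$ in a bicollar that you flag as the ``main obstacle'' is then genuinely the only remaining local issue, and it is handled by the usual spiralling construction. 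A second, smaller point: your cobordism argument tacitly treats the foliated cobordism $W$ between $(S^{3},\mathcal{F})$ and $(S^{3},\mathcal{F}')$ as a product $S^{3}\times[0,1]$ when restricting over $(S^{3}\setminus N(L))\times[0,1]$; for a non-product $W$ this restriction is not defined. It is cleaner to argue directly on the level of the Godbillon--Vey 3-form: the surgery-with-Reeb-filling construction satisfies $GV(\mathcal{F}_{\Sigma})=GV(\mathcal{F})$ under $H^{3}(\Sigma,\mathbb{R})\cong\mathbb{R}\cong H^{3}(S^{3},\mathbb{R})$, and since the Thurston family sweeps out all of $\mathbb{R}$, every class in $H^{3}(\Sigma,\mathbb{R})$ is realized.
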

\begin{proof}
The proof can be found in \cite{AsselmeyerKrol2009}. $\square$
\end{proof}

\subsection{The smooth holonomy groupoid and its $C^{*}$algebra\label{sub:smooth-holonomy-groupoid}}

Let $(M,F)$ be a foliated manifold. Now we shall construct a von
Neumann algebra $W(M,F)$ canonically associated to $(M,F)$ and depending
only on the Lebesgue measure class on the space $X=M/F$ of leaves
of the foliation. \emph{In the following we will identify the leaf
space with this von Neumann algebra.} The classical point of view,
$L^{\infty}(X)$, will only give the center $Z(W)$ of $W$. According
to Connes \cite{Connes94}, we assign to each leaf $\ell\in X$ the
canonical Hilbert space of square-integrable half-densities $L^{2}(\ell)$.
This assignment, i.e. a measurable map, is called a random operator
forming a von Neumann $W(M,F)$. The explicit construction of this
algebra can be found in \cite{Connes1984}. Here we remark that $W(M,F)$
is also a noncommutative Banach algebra which is used above. Alternatively
we can construct $W(M,F)$ as the compact endomorphisms of modules
over the $C^{*}$ algebra $C^{*}(M,F)$ of the foliation $(M,F)$
also known as holonomy algebra. From the point of view of K theory,
both algebras $W(M,F)$ and $C^{*}(M,F)$ are Morita-equivalent to
each other leading to the same $K$ groups. In the following we will
construct the algebra $C^{*}(M,F)$ by using the holonomy groupoid
of the foliation.

Given a leaf $\ell$ of $(M,F)$ and two points $x,y\in\ell$ of this
leaf, any simple path $\gamma$ from $x$ to $y$ on the leaf $\ell$
uniquely determines a germ $h(\gamma)$ of a diffeomorphism from a
transverse neighborhood of $x$ to a transverse neighborhood of $y$.
The germ of diffeomorphism $h(\gamma)$ thus obtained only depends
upon the homotopy class of $\gamma$ in the fundamental groupoid of
the leaf $\ell$, and is called the holonomy of the path $\gamma$.
The holonomy groupoid of a leaf $\ell$ is the quotient of its fundamental
groupoid by the equivalence relation which identifies two paths $\gamma$
and $\gamma'$ from $x$ to $y$ (both in $\ell$) iff $h(\gamma)=h(\gamma')$.
The holonomy covering $\tilde{\ell}$ of a leaf is the covering of
$\ell$ associated to the normal subgroup of its fundamental group
$\pi_{1}(\ell)$ given by paths with trivial holonomy. The holonomy
groupoid of the foliation is the union $G$ of the holonomy groupoids
of its leaves. 

Recall a groupoid ${\tt G}$ is a category where every morphism is
invertible. Let $G_{0}$ be a set of objects and $G_{1}$ the set
of morphisms of ${\tt G}$, then the structure maps of ${\tt G}$
reads as:

\begin{equation}
G_{1}\,_{t}\times_{s}G_{1}\overset{m}{\rightarrow}G_{1}\overset{i}{\rightarrow}G_{1}\overset{s}{\underset{t}{\rightrightarrows}}G_{0}\overset{e}{\rightarrow}G_{1}\label{eq:def-groupoid}\end{equation}
where $m$ is the composition of the composable two morphisms (target
of the first is the source of the second), $i$ is the inversion of
an arrow, $s,\, t$ the source and target maps respectively, $e$
assigns the identity to every object. We assume that $G_{0,1}$ are
smooth manifolds and all structure maps are smooth too. We require
that the $s,\, t$ maps are submersions, thus $G_{1}\,_{t}\times_{s}G_{1}$
is a manifold as well. These groupoids are called \emph{smooth} groupoids.

Given an element $\gamma$ of $G$, we denote by $x=s(\gamma)$ the
origin of the path $\gamma$ and its endpoint $y=t(\gamma)$ with
the range and source maps $t,s$. An element $\gamma$ of $G$ is
thus given by two points $x=s(\gamma)$ and $y=r(\gamma)$ of $M$
together with an equivalence class of smooth paths: the $\gamma(t)$,
$t\in[0,1]$with $\gamma(0)=x$ and $\gamma(1)=y$, tangent to the
bundle $F$ (i.e. with $\frac{d}{dt}\gamma(t)\in F_{\gamma(t)}$,
$\forall t\in\mathbb{R}$) identifying $\gamma_{1}$ and $\gamma_{2}$
as equivalent iff the holonomy of the path $\gamma_{2}\circ\gamma_{1}^{-1}$
at the point $x$ is the identity. The graph $G$ has an obvious composition
law. For $\gamma,\gamma'\in G$ , the composition $\gamma\circ\gamma'$
makes sense if $s(\gamma)=t(\gamma)$. The groupoid $G$ is by construction
a (not necessarily Hausdorff) manifold of dimension $\dim G=\dim V+\dim F$.
We state that $G$ is a smooth groupoid, the \emph{smooth holonomy
groupoid}.

Then the $C^{*}$algebra $C_{r}^{*}(M,F)$ of the foliation $(M,F)$
is the $C^{*}$algebra $C_{r}^{*}(G)$ of the smooth holonomy groupoid
$G$. For completeness we will present the explicit construction (see
\cite{Connes94} sec. II.8). The basic elements of $C_{r}^{*}(M,F)$)
are smooth half-densities with compact supports on $G$, $f\in C_{c}^{\infty}(G,\Omega^{1/2})$,
where $\Omega_{\gamma}^{1/2}$ for $\gamma\in G$ is the one-dimensional
complex vector space $\Omega_{x}^{1/2}\otimes\Omega_{y}^{1/2}$, where
$s(\gamma)=x,t(\gamma)=y$, and $\Omega_{x}^{1/2}$ is the one-dimensional
complex vector space of maps from the exterior power $\Lambda^{k}F_{x}$
,$k=\dim F$, to $\mathbb{C}$ such that \[
\rho(\lambda\nu)=|\lambda|^{1/2}\rho(\nu)\qquad\forall\nu\in\Lambda^{k}F_{x},\lambda\in\mathbb{R}\:.\]
For $f,g\in C_{c}^{\infty}(G,\Omega^{1/2})$, the convolution product
$f*g$ is given by the equality\[
(f*g)(\gamma)=\intop_{\gamma_{1}\circ\gamma_{2}=\gamma}f(\gamma_{1})g(\gamma_{2})\]
Then we define via $f^{*}(\gamma)=\overline{f(\gamma^{-1})}$ a $*$operation
making $C_{c}^{\infty}(G,\Omega^{1/2})$ into a $*$algebra. For each
leaf $L$ of $(M,F)$ one has a natural representation of $C_{c}^{\infty}(G,\Omega^{1/2})$
on the $L^{2}$ space of the holonomy covering $\tilde{L}$ of $L$.
Fixing a base point $x\in L$, one identifies $\tilde{L}$ with $G_{x}=\left\{ \gamma\in G,\, s(\gamma)=x\right\} $and
defines the representation\[
(\pi_{x}(f)\xi)(\gamma)=\intop_{\gamma_{1}\circ\gamma_{2}=\gamma}f(\gamma_{1})\xi(\gamma_{2})\qquad\forall\xi\in L^{2}(G_{x}).\]
The completion of $C_{c}^{\infty}(G,\Omega^{1/2})$ with respect to
the norm \[
||f||=\sup_{x\in M}||\pi_{x}(f)||\]
makes it into a $C^{*}$algebra $C_{r}^{*}(M,F)$. Among all elements
of the $C^{*}$ algebra, there are distinguished elements, idempotent
operators or projectors having a geometric interpretation in the foliation.
For later use, we will construct them explicitly (we follow \cite{Connes94}
sec. $II.8.\beta$ closely). Let $N\subset M$ be a compact submanifold
which is everywhere transverse to the foliation (thus $\dim(N)=\mathrm{codim}(F)$).
A small tubular neighborhood $N'$ of $N$ in $M$ defines an induced
foliation $F'$ of $N'$ over $N$ with fibers $\mathbb{R}^{k},\, k=\dim F$.
The corresponding $C^{*}$algebra $C_{r}^{*}(N',F')$ is isomorphic
to $C(N)\otimes\mathcal{K}$ with $\mathcal{K}$ the $C^{*}$ algebra
of compact operators. In particular it contains an idempotent $e=e^{2}=e^{*}$,
$e=1_{N}\otimes f\in C(N)\otimes\mathcal{K}$ , where $f$ is a minimal
projection in $\mathcal{K}$. The inclusion $C_{r}^{*}(N',F')\subset C_{r}^{*}(M,F)$
induces an idempotent in $C_{r}^{*}(M,F)$. Now we consider the range
map $t$ of the smooth holonomy groupoid $G$ defining via $t^{-1}(N)\subset G$
a submanifold. Let $\xi\in C_{c}^{\infty}(t^{-1}(N),s^{*}(\Omega^{1/2}))$
be a section (with compact support) of the bundle of half-density
$s^{*}(\Omega^{1/2})$ over $t^{-1}(N)$ so that the support of $\xi$
is in the diagonal in $G$ and \[
\intop_{t(\gamma)=y}|\xi(\gamma)|^{2}=1\qquad\forall y\in N.\]
Then the equality\[
e(\gamma)=\sum_{{s(\gamma)=s(\gamma')\atop t(\gamma')\in N}}\bar{\xi}(\gamma'\circ\gamma^{-1})\xi(\gamma')\]
defines an idempotent $e\in C_{c}^{\infty}(G,\Omega^{1/2})\subset C_{r}^{*}(M,F)$.
Thus, such an idempotent is given by a closed curve in $M$ transversal
to the foliation.

\subsection{Some information about the factor $I\! I\! I$ case\label{sub:factor-III-case}}

In our case of codimension-1 foliations of the 3-sphere with nontrivial
Godbillon-Vey invariant we have the result of Hurder and Katok \cite{HurKat:84}.
Then the corresponding von Neumann algebra $W(S^{3},F)$ contains
a factor $I\! I\! I$ algebra. At first we will give an overview about
the factor $I\! I\! I$.

Remember a von Neumann algebra is an involutive subalgebra $M$ of
the algebra of operators on a Hilbert space $H$ that has the property
of being the commutant of its commutant: $(M')'=M$. This property
is equivalent to saying that $M$ is an involutive algebra of operators
that is closed under weak limits. A von Neumann algebra $M$ is said
to be hyperfinite if it is generated by an increasing sequence of
finite-dimensional subalgebras. Furthermore we call $M$ a factor
if its center is equal to $\mathbb{C}$. It is a deep result of Murray
and von Neumann that every factor $M$ can be decomposed into 3 types
of factors $M=M_{I}\oplus M_{II}\oplus M_{III}$. The factor $I$
case divides into the two classes $I_{n}$ and $I_{\infty}$ with
the hyperfinite factors $I_{n}=M_{n}(\mathbb{C})$ the complex square
matrices and $I_{\infty}=\mathcal{L}(H)$ the algebra of all operators
on an infinite-dimensional Hilbert space $H$. The hyperfinite $I\! I$
factors are given by $I\! I_{1}=Cliff_{\mathbb{C}}(E)$, the Clifford
algebra of an infinite-dimensional Euclidean space $E$, and $I\! I_{\infty}=I\! I_{1}\otimes I_{\infty}$.
The case $I\! I\! I$ remained mysterious for a long time. Now we
know that there are three cases parametrized by a real number $\lambda\in[0,1]$:
$I\! I\! I_{0}=R_{W}$ the Krieger factor induced by an ergodic flow
$W$, $I\! I\! I_{\lambda}=R_{\lambda}$ the Powers factor for $\lambda\in(0,1)$
and $I\! I\! I_{1}=R_{\infty}=R_{\lambda_{1}}\otimes R_{\lambda_{2}}$
the Araki-Woods factor for all $\lambda_{1},\lambda_{2}$ with $\lambda_{1}/\lambda_{2}\notin\mathbb{Q}$.
We remark that all factor  $I\! I\! I$ cases are induced by infinite
tensor products of the other factors. One example of such an infinite
tensor space is the Fock space in quantum field theory. 

But now we are interested in an explicit construction of a factor
$I\! I\! I$ von Neumann algebra of a foliation. The interesting example
of this situation is given by the Anosov foliation $F$ of the unit
sphere bundle $V=T_{1}S$ of a compact Riemann surface $S$ of genus
$g>1$ endowed with its Riemannian metric of constant curvature $-1$.
In general the manifold $V$ is the quotient $V=G/T$ of the semi-simple
Lie group $G=PSL(2,\mathbb{R})$, the isometry group of the hyperbolic
plane $\mathbb{H}^{2}$, by the discrete cocompact subgroup $T=\pi_{1}(S)$,
and the foliation $F$ of $V$ is given by the orbits of the action
by left multiplication on $V=G/T$ of the subgroup of upper triangular
matrices of the form\[
\left(\begin{array}{cc}
1 & t\\
0 & 1\end{array}\right)\quad t\in\mathbb{R}\]
The von Neumann algebra $M=W(V,F)$ of this foliation is the (unique)
hyperfinite factor of type $I\! I\! I_{1}=R_{\infty}$. In the subsection
\ref{sub:Non-cobordant-foliations-S3} we describe the construction
of the codimension-1 foliation on the 3-sphere $S^{3}$. The main
ingredient of this construction is the convex polygon $P$ in the
hyperbolic plane $\mathbb{H}^{2}$ having curvature $-1$. The Reeb
components of this foliation of $S^{3}$ are represented by a factor
$I_{\infty}$ algebra and thus do not contribute to the Godbillon-Vey
class. Putting all things together we will get 
\begin{theorem}
The codimension-1 foliation of the 3-sphere $S^{3}$ with non-trivial
Godbillon-Vey invariant is also associated to a von Neumann algebra
$W(S^{3},F)$ induced by the foliation which contains a factor  $I\! I\! I$
algebra, the hyperfinite $I\! I\! I_{1}$ factor $R_{\infty}$. \end{theorem}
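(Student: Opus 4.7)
The plan is to decompose the Thurston foliation $\mathcal{F}_{P}$ along the pieces of its construction, identify the operator-algebraic contribution of each piece, and then invoke Hurder--Katok together with the classification of hyperfinite factors.

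First I would separate $S^{3}$ into the glued-up manifold $M=(S^{2}\setminus\{k\text{ punctures}\})\times S^{1}$ of Subsection~\ref{sub:Non-cobordant-foliations-S3} and the $k$ solid tori carrying Reeb foliations. On each Reeb solid torus the holonomy groupoid is that of a Reeb component; its reduced $C^{*}$-algebra is Morita equivalent to the compacts $\mathcal{K}$, so its weak closure contributes only the hyperfinite factor $I_{\infty}$, which is consistent with the fact that Reeb components carry no transverse invariant measure contributing to $\Gamma_{\mathcal{F}_{P}}$.

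Second, the essential piece $M$ is obtained by gluing $\pi^{-1}(Q_{\epsilon})\subset T_{1}\mathbb{H}^{2}$ along the isometries $dI_{i}$, so its holonomy pseudogroup is conjugate to the action on a transversal $S^{1}\cong\mathbb{RP}^{1}$ of the subgroup $\Gamma\subset PSL(2,\mathbb{R})$ generated by the $I_{i}$. The holonomy groupoid of a codimension-one foliation is amenable, so the weak closure $W(S^{3},F)$ is hyperfinite. Non-vanishing of the Godbillon--Vey class $\Gamma_{\mathcal{F}_{P}}=4\pi\cdot\operatorname{Area}(P)$ then forces a factor $I\!I\!I$ subalgebra by the Hurder--Katok theorem~\cite{HurKat:84}.

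Third, to pin down the type as $I\!I\!I_{1}$ and the factor as $R_{\infty}$, I would compute the Connes $S$-invariant from the Radon--Nikodym cocycle of the $\Gamma$-action on $\mathbb{RP}^{1}$, which is given by the logarithms of the derivatives of the generating M\"obius maps $I_{i}$. The main obstacle is showing that this cocycle has essential range equal to all of $\mathbb{R}_{+}^{*}$ rather than a discrete subgroup $\lambda^{\mathbb{Z}}$: one must verify that the collection $\{\log|I_{i}'(x)|\}$ generates a dense subgroup of $\mathbb{R}$, which for a generic convex polygon follows from the incommensurability of the side lengths and reflection angles of $P$. Once $S(M)=[0,\infty)$ is established, Haagerup's uniqueness theorem for the hyperfinite $I\!I\!I_{1}$ factor gives $W(S^{3},F)\supset R_{\infty}$. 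A useful consistency check is the Anosov foliation on $T_{1}S$ with $g(S)>1$ described at the end of Subsection~\ref{sub:factor-III-case}, whose algebra is known to be exactly $R_{\infty}$ and which is locally modelled on the same $PSL(2,\mathbb{R})$-geometry as the glued piece $M$.
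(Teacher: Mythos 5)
Your route is genuinely different from the paper's. The paper's proof does not analyze the Radon--Nikodym cocycle at all: it argues that the cobordism class of the Thurston foliation depends only on $\operatorname{vol}(P)$, replaces $P$ without loss of generality by a fundamental $4g$-gon for a closed genus-$g$ surface $S$, identifies the resulting foliation with the Anosov (horocycle) foliation of $T_{1}S = PSL(2,\mathbb{R})/\pi_{1}(S)$, and then quotes Connes's computation that this Anosov foliation has von Neumann algebra the hyperfinite factor $I\!I\!I_{1}=R_{\infty}$. You instead stay with the original polygon, pass to the holonomy pseudogroup of $\Gamma\subset PSL(2,\mathbb{R})$ acting on the transversal $\mathbb{RP}^{1}$, use codimension-one amenability for hyperfiniteness, Hurder--Katok for type $I\!I\!I$, and then try to nail the Connes $S$-invariant directly from the logarithmic derivatives of the generators. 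What the paper's route buys is that the final factor identification is off-the-shelf; what your route buys is an argument that, if completed, would actually explain \emph{why} one lands in $I\!I\!I_{1}$ rather than $I\!I\!I_{\lambda}$, which the paper leaves implicit in the appeal to the Anosov case.

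That said, there is a real gap exactly where you flag it. Nontriviality of the Godbillon--Vey class via Hurder--Katok gives a factor $I\!I\!I$ subalgebra, but it does not by itself pin the essential range of the modular cocycle to all of $\mathbb{R}_{+}^{*}$; your argument that the $\{\log|I_{i}'|\}$ generate a dense subgroup is stated only ``for a generic convex polygon.'' The theorem as written is for every polygon with nontrivial Godbillon--Vey invariant, so the genericity hypothesis is not available, and for special polygons you have not ruled out an essential range of the form $\lambda^{\mathbb{Z}}$ (type $I\!I\!I_{\lambda}$). To close this you would need either a proof that the cocycle range is always dense for any hyperbolic polygon with $\operatorname{Area}(P)>0$, or the step the paper actually uses: a foliated-cobordism (or Morita) argument showing that the germinal holonomy, and hence the associated factor, can be replaced by that of a fundamental polygon for a closed surface without changing the von Neumann type, after which the known Anosov $I\!I\!I_{1}$ computation applies. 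One more small caution: the blanket assertion ``the holonomy groupoid of a codimension-one foliation is amenable'' is correct here but should be attributed and restricted (Connes's hyperfiniteness result for $C^{2}$ codimension-one foliations, or amenability of the boundary action of a discrete subgroup of $PSL(2,\mathbb{R})$ on $\mathbb{RP}^{1}$); it is not a tautology.
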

\begin{proof}
This theorem follows mostly from the work Hurder and Katok \cite{HurKat:84}.
The codimension-1 foliation of the 3-sphere was constructed in subsection
\ref{sub:Non-cobordant-foliations-S3}. It admits a non-trivial Godbillon-Vey
invariant related to the volume of the polygon $P$ in $\mathbb{H}^{2}$.
The whole construction do not depend on the number of vertices of
$P$ but on the volume $vol(P)$ only. Thus without loss of generality,
we can choose the even number $4g$ for $g\in\mathbb{N}$ of vertices
for $P$. As model of the hyperbolic plane we choose the usual upper
half-plane model where the group $SL(2,\mathbb{R})$ (the real M\"obius
transformations) and the hyperbolic group $PSL(2,\mathbb{R})$ (the
group of all orientation-preserving isometries of $\mathbb{H}^{2}$)
act via fractional linear transformations. Then the polygon $P$ is
a fundamental polygon representing a Riemann surface $S$ of genus
$g$. Via the procedure above, we can construct a foliation on $T_{1}S=PSL(2,\mathbb{R})/T$
with $T=\pi_{1}(S)$. This foliation is also induced from the foliation
of $T_{1}\mathbb{H}$(as well as the foliation of the $S^{3}$) via
the left action above. The difference between the foliation on $T_{1}S$
and on $S^{3}$ is given by the different usage of the polygon $P$.
Thus the von Neumann algebra $W(S^{3},F)$ of the codimension-1 foliation
of the 3-sphere contains a factor  $I\! I\! I$ algebra in agreement
with the results in \cite{HurKat:84}. In the notation above we have
the unit tangent bundle $T_{1}P$ of the polygon $P$ equipped with
an Anosov foliation (see also \cite{Tamura1992}). The group $PSL(2,\mathbb{R})$
acts as isometry on $\mathbb{H}^{2}$ where the modular group $PSL(2,\mathbb{Z})$
acts as discrete subgroup leaving the polygon $P$ (seen as fundamental
domain) invariant. The upper triangular matrices above are elements
of $PSL(2,\mathbb{R})$ and act by linear fractional transformation
inducing a shift. The orbits of this action have therefore constant
velocity (the horocycle flow) and we are done. $\square$

We showed that this factor  $I\! I\! I$ algebra is the hyperfinite
$I\! I\! I_{1}$ factor $R_{\infty}$. Now one may ask, what is the
physical meaning of the factor $I\! I\! I$? Because of the Tomita-Takesaki-theory,
factor $I\! I\! I$ algebras are deeply connected to the characterization
of equilibrium temperature states of quantum states in statistical
mechanics and field theory also known as Kubo-Martin-Schwinger (KMS)
condition. Furthermore in the quantum field theory with local observables
(see Borchers \cite{Borchers2000} for an overview) one obtains close
connections to Tomita-Takesaki-theory. For instance one was able to
show that on the vacuum Hilbert space with one vacuum vector the algebra
of local observables is a factor $I\! I\! I_{1}$ algebra. As shown
by Thiemann et. al. \cite{Thiemann2006} on a class of diffeomorphism
invariant theories there exists an unique vacuum vector. Thus the
observables algebra must be of this type.
\end{proof}

\section{Exotic $\mathbb{R}^{4}$ and codimension-one foliations\label{sec:Exotic-R4-codim-1-foliation}}

Einsteins insight that gravity is the manifestation of geometry leads
to a new view on the structure of spacetime. From the mathematical
point of view, spacetime is a smooth 4-manifold endowed with a (smooth)
metric as basic variable for general relativity. Later on, the existence
question for Lorentz structure and causality problems (see Hawking
and Ellis \cite{HawEll:94}) gave further restrictions on the 4-manifold:
causality implies non-compactness, Lorentz structure needs a codimension-1
foliation. Usually, one starts with a globally foliated, non-compact
4-manifold $\Sigma\times\mathbb{R}$ fulfilling all restrictions where
$\Sigma$ is a smooth 3-manifold representing the spatial part. But
other non-compact 4-manifolds are also possible, i.e. it is enough
to assume a non-compact, smooth 4-manifold endowed with a codimension-1
foliation. All these restrictions on the representation of spacetime
by the manifold concept are clearly motivated by physical questions.
Among the properties there is one distinguished element: the smoothness.
Usually one assumes a smooth, unique atlas of charts (i.e. a smooth
or differential structure) covering the manifold where the smoothness
is induced by the unique smooth structure on $\mathbb{R}$. But that
is not the full story. Even in dimension 4, there are an infinity
of possible other smoothness structures (i.e. a smooth atlas) non-diffeomorphic
to each other. For a deeper insight we refer to the book \cite{Asselmeyer2007}.

\subsection{Smoothness on manifolds}

If two manifolds are homeomorphic but non-diffeomorphic, they are
\textbf{exotic} to each other. The smoothness structure is called
an \textbf{exotic smoothness structure}.

The implications for physics are tremendous because we rely on the
smooth calculus to formulate field theories. Thus different smoothness
structures have to represent different physical situations leading
to different measurable results. But it should be stressed that \emph{exotic
smoothness is not exotic physics.} Exotic smoothness is a mathematical
structure which should be further explored to understand its physical
relevance.

Usually one starts with a topological manifold $M$ and introduces
structures on them. Then one has the following ladder of possible
structures:\begin{eqnarray*}
\mbox{Topology}\to & \mbox{\mbox{piecewise-linear(PL)}}\to & \mbox{Smoothness}\to\\
\qquad\to & \mbox{bundles, Lorentz, Spin etc.}\to & \mbox{metric, geometry,...}\end{eqnarray*}
We do not want to discuss the first transition, i.e. the existence
of a triangulation on a topological manifold. But we remark that the
existence of a PL structure implies uniquely a smoothness structure
in all dimensions smaller than 7 \cite{KirSie:77}. The following
basic facts should the reader keep in mind for any $n-$dimensional
manifold $M^{n}$:
\begin{enumerate}
\item The maximal differentiable atlas $\mathcal{A}$ of $M^{n}$ is the
smoothness structure.
\item Every manifold $M^{n}$ can be embedded in $\mathbb{R}^{N}$ with
$N>2n$. A smooth embedding $M^{n}\hookrightarrow\mathbb{R}^{N}$
induces the \emph{standard smooth structure} on $M$. All other possible
smoothness structures are called exotic smoothness structures.
\item The existence of a smoothness structure is \emph{necessary} to introduce
Riemannian or Lorentz structures on $M$, but the smoothness structure
don't further restrict the Lorentz structure.
\end{enumerate}

\subsection{Small exotic $\mathbb{R}^{4}$'s and Akbulut corks}

Now we consider two homeomorphic, smooth, but non-diffeomorphic 4-manifolds
$M_{0}$ and $M$. As expressed above, a comparison of both smoothness
structures is given by a h-cobordism $W$ between $M_{0}$ and $M$
($M,M_{0}$ are homeomorphic). Let the 4-manifolds additionally be
compact, closed and simple-connected, then we have the structure theorem%
\footnote{A diffeomorphism will be described by the symbol $=$ in the following.%
} of h-cobordisms \cite{CuFrHsSt:97}:
\begin{theorem}
\label{thm:Akbulut-cork}Let $W$ be a h-cobordisms between $M_{0},\, M$,
then there are contractable submanifolds $A_{0}\subset M_{0},\,\, A\subset M$
and a h subcobordism $X\subset W$ with $\partial X=A_{0}\sqcup A$,
so that the remaining h-cobordism $W\setminus X$ trivializes $W\setminus X=(M_{0}\setminus A_{0})\times[0,1]$
inducing a diffeomorphism between $M_{0}\setminus A_{0}$ and $M\setminus A$.
\end{theorem}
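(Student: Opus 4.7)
The plan is to analyze the $5$-dimensional h-cobordism $W$ by handle decomposition, following the strategy of Curtis--Freedman--Hsiang--Stong (and independently Matveyev). I would choose a Morse function $f \colon W \to [0,1]$ with $f^{-1}(0) = M_{0}$ and $f^{-1}(1) = M$, producing a handle presentation of $W$ relative to $M_{0}\times[0,1]$. Since both boundary components are simply-connected and $W$ is an h-cobordism, $W$ itself is simply-connected, and Smale's handle-trading arguments eliminate all $0$-, $1$-, $4$-, and $5$-handles. This reduces $W$ to a handle presentation with only $2$- and $3$-handles, attached in that order over $M_{0}\times[0,1]$.

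The next step is to confront the algebraic cancellation. Because $H_{*}(W,M_{0})=0$, the number of $2$-handles equals the number of $3$-handles, and the matrix of algebraic intersection numbers between the attaching spheres of the $3$-handles and the belt spheres of the $2$-handles (both living in the intermediate level set) can be reduced to the identity by elementary handle slides. In dimension at least six one would now apply the Whitney trick to cancel the handles geometrically, yielding the smooth h-cobordism theorem. Here the intermediate level is a $4$-manifold and the Whitney disks are $2$-dimensional, so they cannot generically be embedded disjointly from one another or from the spheres they are meant to isotope; this is the well-known smooth obstruction that prevents a smooth h-cobordism theorem from holding in this dimension.

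The geometric heart of the argument, which I would carry out next, is to \emph{localize} this obstruction to a compact piece of controllable topological type. One adds sufficiently many cancelling $2/3$-handle pairs and performs further handle slides so that a regular neighborhood $X\subset W$ of the union of all $2$-handle cores, $3$-handle cocores, the relevant attaching/belt spheres, and the immersed Whitney disks used to realize the intersection matrix as the identity, becomes contractible. Outside $X$ the remaining handle decomposition is a trivial product, and the gradient-like flow of $f$ supplies a diffeomorphism $W\setminus \mathrm{int}(X)\cong (M_{0}\setminus A_{0})\times[0,1]$, where $A_{0}=X\cap M_{0}$ and $A=X\cap M$. Since $X$ is itself an h-cobordism between its two boundary pieces and is contractible, both $A_{0}$ and $A$ are contractible $4$-manifolds, which yields the desired sub-h-cobordism with $\partial X=A_{0}\sqcup A$.

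The main obstacle is precisely the engulfing step that makes $X$ contractible: one must arrange the additional handle pairs and the immersed Whitney disks so that their union admits a contractible regular neighborhood, which amounts to killing the $\pi_{1}$ and homology introduced by the extra data while keeping the exterior a product. This is the technical content of Matveyev's refinement of Curtis--Freedman--Hsiang--Stong, and once it is carried out, the remaining steps --- handle cancellation outside $X$ and extraction of the product structure via the Morse flow --- follow from standard smooth manipulations.
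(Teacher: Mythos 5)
Your proposal is correct and follows essentially the same route as the one the paper relies on: the paper does not reprove this result but cites Curtis--Freedman--Hsiang--Stong \cite{CuFrHsSt:97} (with Matveyev's refinement implicit) and sketches exactly the handle-theoretic picture you describe --- eliminating $0$-/$1$-/$4$-/$5$-handles, reducing to cancelling $2$-/$3$-handle pairs, and localizing the failure of the Whitney trick in a contractible sub-h-cobordism $X$ whose ends become the Akbulut corks, with the complement a product carried by the Morse flow.
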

In short it means that the smoothness structure of $M$ is determined
by the contractable manifold $A$ -- \emph{its} \emph{Akbulut cork}
-- and by the embedding of $A$ into $M$. As shown by Freedman\cite{Fre:82},
the Akbulut cork has a homology 3-sphere%
\footnote{A homology 3-sphere is a 3-manifold with the same homology as the
3-sphere $S^{3}$.%
} as boundary. The embedding of the cork can be derived now from the
structure of the h-subcobordism $X$ between $A_{0}$ and $A$. For
that purpose we cut $A_{0}$ out from $M_{0}$ and $A$ out from $M$.
Then we glue in both submanifolds $A_{0},A$ via the maps $\tau_{0}:\partial A_{0}\to\partial(M_{0}\setminus A_{0})=\partial A_{0}$
and $\tau:\partial A\to\partial(M\setminus A)=\partial A$. Both maps
$\tau_{0},\tau$ are involutions, i.e. $\tau\circ\tau=id$. One of
these maps (say $\tau_{0}$) can be chosen to be trivial (say $\tau_{0}=id$).
Thus the involution $\tau$ determines the smoothness structure. Especially
the topology of the Akbulut cork $A$ and its boundary $\partial A$
is given by the topology of $M$. For instance, the Akbulut cork of
the blow-uped 4-dimensional \emph{K3 surface $K3\#\overline{\mathbb{C}P}^{2}$}
is the so-called \emph{Mazur manifold} \cite{AkbKir:79,Akb:91} with
the \emph{Brieskorn-Sphere} $B(2,5,7)$ as boundary. Akbulut and its
coworkers \cite{Akbulut08,Akbulut09} discuss many examples of Akbulut
corks and the dependence of the smoothness structure on the cork.

For the following we need a short account of the proof of the h-cobordism
structure theorem. The interior of every h-cobordism can be divided
into pieces, called handle \cite{Mil:65}. A $k$-handle is the manifold
$D^{k}\times D^{5-k}$ which will be glued along the boundary $S^{k}\times D^{5-k}$.
The pairs of $0-/1-$ and $4-/5-$handles in a h-cobordism between
the two homeomorphic 4-manifolds $M_{0}$ and $M$ can be killed by
a general procedure (\cite{Mil:65}, \S8). Thus only the pairs of
$2-/3-$handles are left. Exactly these pairs are the difference between
the smooth h-cobordism and the topological h-cobordism. To eliminate
the $2-/3-$handles one has to embed a disk without self-intersections
into $M$ (Whitney trick). But that is mostly impossible in 4-dimensional
manifolds. Therefore Casson \cite{Cas:73} constructed by an infinite,
recursive process a special handle -- the Casson-handle $CH$ -- containing
the required disk without self-intersections. Freedman was able to
show topologically the existence of this disk and he constructs a
homeomorphism between every Casson handle $CH$ and the open 2-handle
$D^{2}\times\mathbb{R}^{2}$ \cite{Fre:82}. But $CH$ is in general
non-diffeomorphic to $D^{2}\times\mathbb{R}^{2}$ as shown later by
Gompf \cite{Gom:84,Gom:89}. 

Now we consider the smooth h-cobordism $W$ together with a neighborhood
$N$ of $2-/3-$handles. It is enough to assume a pair of handles
with two self-intersections (of opposite orientation) between the
2- and 3-Spheres at the boundary of the handle. Thus one can construct
an Akbulut cork $A$ in $M$ out of this data \cite{CuFrHsSt:97}.
The pair of $2-/3-$handles can be eliminated topologically by the
embedding of a Casson handle. Then as shown by Bizaca and Gompf \cite{BizGom:96}
the neighborhood $N$ of the handle pair as well the neighborhood
$N(A)$ of the embedded Akbulut cork consists of the cork $A$ and
the Casson handle $CH$. Especially the open neighborhood $N(A)$
of the Akbulut cork is an exotic $\mathbb{R}^{4}$. The situation
was analyzed in \cite{GomSti:1999}:
\begin{theorem}
\label{thm:fail-h-cobordism-exotic-R4}Let $W^{5}$ be a non-trivial
(smooth) h-cobordism between $M_{0}^{4}$ and $M^{4}$ (i.e. $W$
is not diffeomorphic to $M\times[0,1]$). Then there is an open sub-h-cobordism
$U^{5}$ that is homeomorphic to $\mathbb{R}^{4}\times[0,1]$ and
contains a compact contractable sub-h-cobordism $X$ (the cobordism
between the Akbulut corks, see above), such that both $W$ and $U$
are trivial cobordisms outside of $X$, i.e. one has the diffeomorphisms\[
W\setminus X=((W\cap M)\setminus X)\times[0,1]\quad\mbox{and}\quad U\setminus X=((U\cap M)\setminus X)\times[0,1]\]
(the latter can be chosen to be the restriction of the former). Furthermore
the open sets $U\cap M$ and $U\cap M_{0}$ are homeomorphic to $\mathbb{R}^{4}$
which are exotic $\mathbb{R}^{4}$ if $W$ is non-trivial.
\end{theorem}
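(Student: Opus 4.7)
The plan is to build $U$ from a handle-theoretic analysis of $W$ and then use Freedman's homeomorphism theorem for Casson handles to identify it topologically. First I would decompose the smooth cobordism $W$ into handles and cancel the $0/1$- and $4/5$-handle pairs by Smale's standard moves in the metastable range (as in Milnor, \S 8 of \cite{Mil:65}); this is exactly the step already invoked in the proof of Theorem \ref{thm:Akbulut-cork}. What remains are pairs of $2$- and $3$-handles whose attaching data in the middle level contains the algebraically cancelling but geometrically linked $2$-spheres. By the Curtis--Freedman--Hsiang--Stong argument (Theorem \ref{thm:Akbulut-cork}) one isolates these obstructions inside a compact contractible sub-h-cobordism $X \subset W$ with boundary $A_0 \sqcup A$, while on the complement the Whitney trick goes through smoothly, giving the product structure $W \setminus X = ((W\cap M)\setminus X)\times [0,1]$.

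Next I would construct $U$ by thickening $X$ to an open sub-h-cobordism. Concretely, I would take an open regular neighborhood of $X$ together with a Casson handle $CH$ attached along the framed link describing the residual $2$-handle, following the Bizaca--Gompf construction \cite{BizGom:96}; the $3$-handle is then dual to a standardly embedded $2$-disk inside the Casson handle. By Freedman's theorem \cite{Fre:82}, $CH$ is homeomorphic to the open $2$-handle $D^2 \times \mathbb{R}^2$, and assembling these pieces the resulting open sub-h-cobordism $U$ is homeomorphic to $\mathbb{R}^4 \times [0,1]$. Since $U$ is obtained by thickening $X$ inside the already-trivial complement, the same product structure restricts: $U \setminus X = ((U \cap M) \setminus X)\times[0,1]$, and one can arrange this trivialization to be the restriction of the one on $W \setminus X$.

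The slices $U \cap M_0$ and $U \cap M$ are then open subsets of $M_0$ and $M$ homeomorphic to $\mathbb{R}^4$ (each is a $4$-dimensional topological collar of an open contractible neighborhood of the corresponding Akbulut cork). To see they are \emph{exotic} whenever $W$ is nontrivial, I would argue by contradiction: if $U \cap M_0$ were diffeomorphic to the standard $\mathbb{R}^4$, then inside $U$ one could isotope the Casson handle to a genuine smooth $2$-handle and perform the Whitney move smoothly, producing a smooth trivialization of all the remaining $2/3$-handle pairs. Combined with the product structure on $W \setminus X$, this would contradict the hypothesis that $W$ is not diffeomorphic to $M \times [0,1]$.

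The main obstacle is the identification $U \cong_{\mathrm{TOP}} \mathbb{R}^4 \times [0,1]$: this step depends entirely on Freedman's deep result that Casson handles are topologically standard, and on the careful bookkeeping needed to ensure that the topological trivialization of the Casson handle can be assembled compatibly with the smooth product structure of $W \setminus X$ along the common boundary. The rest of the argument, in particular the promotion of ``topologically $\mathbb{R}^4$'' to ``exotic $\mathbb{R}^4$,'' is then a formal consequence of the nontriviality of $W$.
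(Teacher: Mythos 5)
Your proposal follows the same chain of ingredients the paper sketches in the paragraph immediately preceding the theorem and then cites from \cite{GomSti:1999}: Smale--Milnor handle cancellation (\cite{Mil:65}), isolation of the residual $2/3$-handle pairs in a compact contractible sub-h-cobordism $X$ via Theorem~\ref{thm:Akbulut-cork} (\cite{CuFrHsSt:97}), the Bizaca--Gompf identification of the cork neighborhood as cork plus Casson handle \cite{BizGom:96}, Freedman's homeomorphism theorem \cite{Fre:82} to identify $U$ with $\mathbb{R}^4\times[0,1]$ topologically, and a contradiction argument for exoticness. The paper offers no independent proof of this statement---it is imported verbatim from \cite{GomSti:1999}---so your sketch takes essentially the same route, with the only delicate step (as you yourself flag) being the assembly of Freedman's topological trivialization of the Casson handle compatibly with the smooth product structure on $W\setminus X$.
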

Then one gets an exotic $\mathbb{R}^{4}$ which smoothly embeds automatically
in the 4-sphere, called a small exotic $\mathbb{R}^{4}$. Furthermore
we remark that the exoticness of the $\mathbb{R}^{4}$ is connected
with the non-trivial smooth h-cobordism $W^{5}$, i.e. the failure
of the smooth h-cobordism theorem implies the existence of small exotic
$\mathbb{R}^{4}$'s.

\subsection{Exotic $\mathbb{R}^{4}$ and Casson handles}

The theorem \ref{thm:fail-h-cobordism-exotic-R4} relates a non-trivial
h-cobordism between two compact, simple-connected, smooth 4-manifolds
to a small exotic $\mathbb{R}^{4}$. Using theorem \ref{thm:Akbulut-cork},
we can understand where the non-triviality of the h-cobordism comes
from: one of the Akbulut corks, say $A$, must be glued in by using
a non-trivial involution of the boundary $\partial A$. In the notation
above, there is a non-product h-cobordism $W$ between $M^{4}$ and
$M_{0}^{4}$ with a h-subcobordism $X$ between $A_{0}\subset M_{0}$
and $A\subset M$. There is an open neighborhood $U$ of the h-subcobordism
$X$ which is an open h-cobordism $U$ between the open neighborhoods
$N(A)\subset M,\: N(A_{0})\subset M_{0}$. Both neighborhoods are
homeomorphic to $\mathbb{R}^{4}$ but not diffeomorphic to the standard
$\mathbb{R}^{4}$ (as induced from the non-productness of the h-cobordism
$W$). This exotic $\mathbb{R}^{4}$ is the interior of the attachment
of a Casson handle $CH$ to the boundary $\partial A$ of the cork
$A$.

Now let us consider the basic construction of the Casson handle $CH$.
Let $M$ be a smooth, compact, simple-connected 4-manifold and $f:D^{2}\to M$
a (codimension-2) mapping. By using diffeomorphisms of $D^{2}$ and
$M$, one can deform the mapping $f$ to get an immersion (i.e. injective
differential) generically with only double points (i.e. $\#|f^{-1}(f(x))|=2$)
as singularities \cite{GolGui:73}. But to incorporate the generic
location of the disk, one is rather interesting in the mapping of
a 2-handle $D^{2}\times D^{2}$ induced by $f\times id:D^{2}\times D^{2}\to M$
from $f$. Then every double point (or self-intersection) of $f(D^{2})$
leads to self-plumbings of the 2-handle $D^{2}\times D^{2}$. A self-plumbing
is an identification of $D_{0}^{2}\times D^{2}$ with $D_{1}^{2}\times D^{2}$
where $D_{0}^{2},D_{1}^{2}\subset D^{2}$ are disjoint sub-disks of
the first factor disk%
\footnote{In complex coordinates the plumbing may be written as $(z,w)\mapsto(w,z)$
or $(z,w)\mapsto(\bar{w},\bar{z})$ creating either a positive or
negative (respectively) double point on the disk $D^{2}\times0$ (the
core).%
}. Consider the pair $(D^{2}\times D^{2},\partial D^{2}\times D^{2})$
and produce finitely many self-plumbings away from the attaching region
$\partial D^{2}\times D^{2}$ to get a kinky handle $(k,\partial^{-}k)$
where $\partial^{-}k$ denotes the attaching region of the kinky handle.
A kinky handle $(k,\partial^{-}k)$ is a one-stage tower $(T_{1},\partial^{-}T_{1})$
and an $(n+1)$-stage tower $(T_{n+1},\partial^{-}T_{n+1})$ is an
$n$-stage tower union kinky handles $\bigcup_{\ell=1}^{n}(T_{\ell},\partial^{-}T_{\ell})$
where two towers are attached along $\partial^{-}T_{\ell}$. Let $T_{n}^{-}$
be $(\mbox{interior}T_{n})\cup\partial^{-}T_{n}$ and the Casson handle
\[
CH=\bigcup_{\ell=0}T_{\ell}^{-}\]
is the union of towers (with direct limit topology induced from the
inclusions $T_{n}\hookrightarrow T_{n+1}$). A Casson handle is specified
up to (orientation preserving) diffeomorphism (of pairs) by a labeled
finitely-branching tree with base-point {*}, having all edge paths
infinitely extendable away from {*}. Each edge should be given a label
$+$ or $-$. Here is the construction: tree $\to CH$. Each vertex
corresponds to a kinky handle; the self-plumbing number of that kinky
handle equals the number of branches leaving the vertex. The sign
on each branch corresponds to the sign of the associated self plumbing.
The whole process generates a tree with infinite many levels. In principle,
every tree with a finite number of branches per level realizes a corresponding
Casson handle. The simplest non-trivial Casson handle is represented
by the tree $Tree_{+}$: each level has one branching point with positive
sign $+$. 

Given a labeled based tree $Q$, let us describe a subset $U_{Q}$
of $D^{2}\times D^{2}$. Now we will construct a $(U_{Q},\partial D^{2}\times D^{2})$
which is diffeomorphic to the Casson handle associated to $Q$. In
$D^{2}\times D^{2}$ embed a ramified Whitehead link with one Whitehead
link component for every edge labeled by $+$ leaving {*} and one
mirror image Whitehead link component for every edge labeled by $-$(minus)
leaving {*}. Corresponding to each first level node of $Q$ we have
already found a (normally framed) solid torus embedded in $D^{2}\times\partial D^{2}$.
In each of these solid tori embed a ramified Whitehead link, ramified
according to the number of $+$ and $-$ labeled branches leaving
that node. We can do that process for every level of $Q$. Let the
disjoint union of the (closed) solid tori in the $n$th family (one
solid torus for each branch at level $n$ in $Q$) be denoted by $X_{n}$.
$Q$ tells us how to construct an infinite chain of inclusions:\[
\ldots\subset X_{n+1}\subset X_{n}\subset X_{n-1}\subset\ldots\subset X_{1}\subset D^{2}\times\partial D^{2}\]
and we define the Whitehead decomposition $Wh_{Q}=\bigcap_{n=1}^{\infty}X_{n}$
of $Q$. $Wh_{Q}$ is the Whitehead continuum \cite{Whitehead35}
for the simplest unbranched tree. We define $U_{Q}$ to be\[
U_{Q}=D^{2}\times D^{2}\setminus(D^{2}\times\partial D^{2}\cup\mbox{closure}(Wh_{Q}))\]
alternatively one can also write\begin{equation}
U_{Q}=D^{2}\times D^{2}\setminus\mbox{cone}(Wh_{Q})\label{eq:UQ-diffeo-CH}\end{equation}
where $\mbox{cone}()$ is the cone of a space\[
cone(A)=A\times[0,1]/(x,0)\sim(x',0)\qquad\forall x,x'\in A\]
over the point $(0,0)\in D^{2}\times D^{2}$. As Freedman (see \cite{Fre:82}
Theorem 2.2) showed $U_{Q}$ is diffeomorphic to the Casson handle
$CH_{Q}$ given by the tree $Q$.

\subsection{The design of a Casson handle and its foliation\label{sub:design-of-CH}}

A Casson handle is represented by a labeled finitely-branching tree
$Q$ with base point $\star$, having all edge paths infinitely extendable
away from $\star$. Each edge should be given a label $+$ or $-$
and each vertex corresponds to a kinky handle where the self-plumbing
number of that kinky handle equals the number of branches leaving
the vertex. The open handle $D^{2}\times\mathbb{R}^{2}$ is represented
by the $\star$, i.e. there are no kinky handles. One of the cornerstones
of Freedmans proof of the homeomorphism between a Casson handle $CH$
and the open 2-handle $H=D^{2}\times\mathbb{R}^{2}$ are the reembedding
theorems. Then one foliates $CH$ and $H$ by copies of the frontier
$Fr(CH)$. The frontier of a set $K$ is defined by $Fr(K)=\mbox{closure}(\mbox{closure}(K)\setminus K)$.
As example we consider the interior $int(D^{2})$ of a disk and obtain
for the frontier $Fr(int(D^{2}))=\mbox{closure}(\mbox{closure}(int(D^{2}))\setminus int(D^{2}))=\partial D^{2}$,
i.e. the boundary of the disk $D^{2}$. Then Freedman (\cite{Fre:82}
p.398) constructs another labeled tree $S(Q)$ from the tree $Q$.
There is a base point from which a single edge (called {}``decimal
point'') emerges. The tree is binary: one edge enters and two edges
leaving a vertex (or every vertex is trivalent). The edges are named
by initial segments of infinite base 3-decimals representing numbers
in the standard {}``middle third'' Cantor set%
\footnote{This kind of Cantor set is given by the following construction: Start
with the unit Interval $S_{0}=[0,1]$ and remove from that set the
middle third and set $S_{1}=S_{0}\setminus(1/3,2/3)$ Continue in
this fashion, where $S_{n+1}=S_{n}\setminus\{\mbox{middle thirds of subintervals of \ensuremath{S_{n}}}\}$.
Then the Cantor set $C.s.$ is defined as $C.s.=\cap_{n}S_{n}$. With
other words, if we using a ternary system (a number system with base
3), then we can write the Cantor set as $C.s.=\{x:x=(0.a_{1}a_{2}a_{3}\ldots)\mbox{ where each \ensuremath{a_{i}=0}or \ensuremath{2}}\}$.%
} $CS\subset[0,1]$. Each edge $e$ of $S(Q)$ carries a label $\tau_{e}$
where $\tau_{e}$ is an ordered finite disjoint union of 5-stage towers
together with an ordered collection of standard loops generating the
fundamental group. There is three constraints on the labels which
leads to the correspondence between the $\pm$ labeled tree $Q$ and
the (associated) $\tau$-labeled tree $S(Q)$. One calls $S(Q)$ the
design.

Two words are in order for the design $S(Q)$: first, every sequence
of $0$'s and $2$'s is one path in $S(Q)$ representing one embedded
Casson handle $CH_{Q_{1}}\subset CH_{Q}$ where both trees are related
like $Q\subset Q_{1}$. For example, the Casson handle corresponding
to $.020202...$ is obtained as the union of the 5-stage towers $T^{0}\cup T^{02}\cup T^{020}\cup T^{0202}\cup T^{02020}\cup T^{020202}\cup...$.
For later usage we identify the sequence $.00000...$ with the Tree
$Tree_{+}$. Secondly, there are gaps, i.e. we have only a Cantor
set of Casson handles not a continuum. For instance a gap is lying
between the paths $.022222\ldots$ and $.20000\ldots$ In the proof
of Freedman, the gaps are shrunk to a point and one gets the desired
homeomorphism. Here we will use this structure to produce a foliation
of the design. Every path in $S(Q)$ is represented by one sequence
over the alphabet $\{\text{0,2\}}$. Every gap is a sequence containing
at least one $1$ (so for instance $.1222...$ or $.012222...)$.
There is now a natural order structure given by the sequence (for
instance $.022222...<.12222...<.22222...$). The leaves are the corresponding
gaps or Casson handles (represented by the union 5-stage towers ending
with $T^{02222...},\, T^{12222..}\mbox{ or }T^{22222...}$). The tree
structure of the design $S(Q)$ should be also reflected in the foliation
to represent every path in $S(Q)$ as a union of 5-stage towers. By
the reembedding theorems, the 5-stage towers can be embedded into
each other. Then we obtain two foliations of the (topological) open
2-handle $D^{2}\times\mathbb{R}^{2}$: a codimension-1 foliation along
one $\mathbb{R}-$axis labeled by the sequences (for instance $.022222...<.12222...<.22222...$)
and a second codimension-1 foliation along the radius of the disk
$D^{2}$ induced by inclusion of the 5-stage towers (for instance
$T^{0}\supset T^{02}\supset T^{020}\supset...$). Especially the exploration
of a Casson handle by using the design is given by its frontier, in
this case, minus the attaching region. In case of a usual tower we
get the frontier $S^{1}\times D^{2}/Wh_{\gamma}$ with $\gamma\in S(Q)$.
The gaps have a similar structure. Then the foliation of the Casson
handle (induced from the design) is given by the leaves $S^{1}\times D^{1}$
over the disk $D^{2}$ in the Casson handle, i.e. the disk $D^{2}$
is foliated by parallel lines (see Fig. \ref{fig:foliation-of-the-disk}).
\begin{figure}
\includegraphics[scale=0.5]{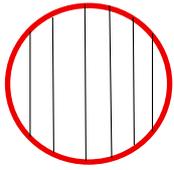}

\caption{foliation of the disk $D^{2}$ in the design $S(Q)$ \label{fig:foliation-of-the-disk}}

\end{figure}
 So, every Casson handle with a given tree $Q$ has a codimension-one
foliation given by its design.

This foliation can be also understood as a foliated cobordism. For
that purpose we consider the foliation as part of a foliation of the
2-sphere (see Fig. \ref{fig:foliation-of-S2}). %
\begin{figure}
\includegraphics[scale=0.5]{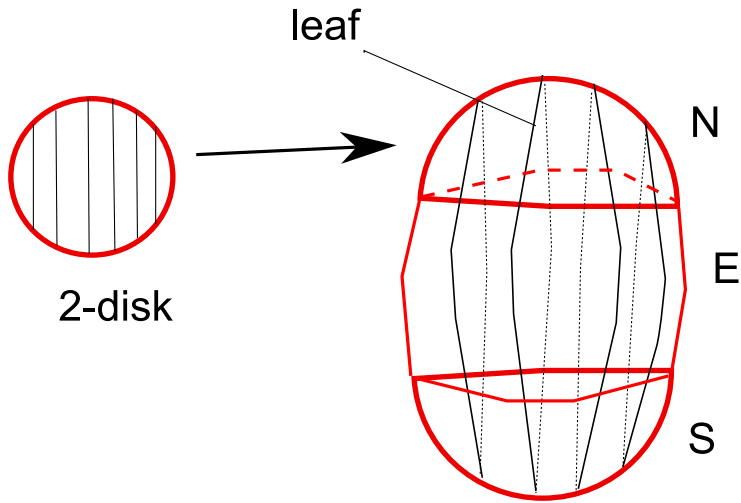}

\caption{foliation of the 2-sphere as foliated cobordism $E$ of the two disks
$N,S$ \label{fig:foliation-of-S2}}

\end{figure}
 The 2-sphere is decomposed by $S^{2}=N\cup E\cup S$, two pole regions
$N,S$ ($N,S=D^{2})$ and an equator region $E=S^{1}\times D^{1}$.
The foliation of the disk as in Fig. \ref{fig:foliation-of-the-disk}
can be used to foliate $N$ and $S$. Both foliations can be connected
by the leaves $S^{1}$ which are the longitudes. Then one obtains
a foliated cobordism between $N$ and $S$ given by the obvious foliation
of the equator region $E$ (a cylinder).

\subsection{Capped gropes and its design\label{sub:Capped-gropes}}

In this subsection we discuss a possible generalization of Casson
handles. The modern way to the classification of 4-manifolds used
{}``capped gropes'', a mixed variant of Casson handle and grope
(chapters 1 to 4 in \cite{FreQui:90}). We do not want to complicate
the situation more than needed. But for later developments we have
to discuss some part of the theory but we remark that all results
can be easily generalized to capped gropes as well.

A grope is a special pair (2-complex,circle), where the circle is
referred to as the boundary of the grope. There is an anomalous case
when the depth is $1$: the unique grope of depth 1 is the pair (circle,circle).
A grope of depth 2 is a punctured surface with the boundary circle
specified (see Fig. \ref{fig:Example-of-grope}). %
\begin{figure}
\includegraphics[scale=0.25]{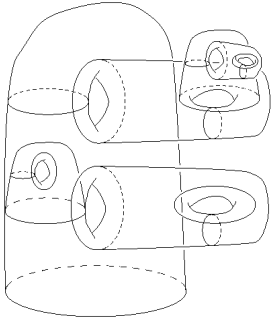}

\caption{Example of a grope with symplectic basis as curves around the holes\label{fig:Example-of-grope}}

\end{figure}
To form a grope $G$ of depth $n$, take a punctured surface, $F$,
and prescribe a symplectic basis $\left\{ \alpha_{i},\beta_{j}\right\} $.
That is, $\alpha_{i}$ and $\beta_{j}$ are embedded curves in $F$
which represent a basis of $H_{1}(F)$ such that the only intersections
among the $\alpha_{i}$ and $\beta_{j}$ occur when $\alpha_{i}$
and $\beta_{j}$ meet in a single point $\alpha_{i}\cdot\beta_{j}=1$.
Now glue gropes of depth $<n$ along their boundary circles to each
$\alpha_{i}$ and $\beta_{j}$ with at least one such added grope
being of depth $n-1$. (Note that we are allowing any added grope
to be of depth 1, in which case we are not really adding a grope.)
The surface $F\subset G$ is called the bottom stage of the grope
and its boundary is the boundary of the grope. The tips of the grope
are those symplectic basis elements of the various punctured surfaces
of the grope which do not have gropes of depth $>1$ attached to them. 
\begin{definition}
A capped grope is a grope with disks (the caps) attached to all its
tips. The grope without the caps is sometimes called the body of the
capped grope. 
\end{definition}
The capped grope (as cope) was firstly described by Freedman in 1983\cite{Fre:83}.
The caps are only immersed disks like in case of the Casson handle
to make the grope simple-connected. The great advantage is the simpler
frontier, instead of $S^{1}\times D^{2}/Wh_{\gamma}$ (see subsection
\ref{sub:design-of-CH}) one has solid tori $S^{1}\times D^{2}$ as
frontier of the capped grope (as shown in \cite{AncelStarbird1989}).
The corresponding design (and its parametrization) can be described
similar to the Casson handle by sequences containing $0$ and $2$
(see section 4.5 in \cite{FreQui:90}). There are also gaps (described
by a $1$ in the sequence) who look like $S^{1}\times D^{3}$.

\subsection{The radial family of uncountable-many small exotic $\mathbb{R}^{4}$}

Given a small exotic $\mathbb{R}^{4}$ $R$ induced from the non-product
h-cobordism $W$ between $M$ and $M_{0}$ with Akbulut corks $A\subset M$
and $A_{0}\subset M_{0}$, respectively. Let $K\subset\mathbb{R}^{4}$
be a compact subset. Bizaca and Gompf \cite{BizGom:96} constructed
the small exotic $\mathbb{R}_{1}^{4}$ by using the simplest tree
$Tree_{+}$. Bizaca \cite{Biz:94,Bizaca1995} showed that the Casson
handle generated by $Tree_{+}$ is an exotic Casson handle. Using
Theorem 3.2 of \cite{DeMichFreedman1992}, there is a topological
radius function $\rho:\mathbb{R}_{1}^{4}\to[0,+\infty)$ (polar coordinates)
so that $\mathbb{R}_{t}^{4}=\rho^{-1}([0,r))$ with $t=1-\frac{1}{r}$.
Then $K\subset\mathbb{R}_{0}^{4}$ and $\mathbb{R}_{t}^{4}$ is also
a small exotic $\mathbb{R}^{4}$ for $t$ belonging to a Cantor set
$CS\subset[0,1]$. Especially two exotic $\mathbb{R}_{s}^{4}$ and
$\mathbb{R}_{t}^{4}$ are non-diffeomorphic for $s<t$ except for
countable many pairs. In \cite{DeMichFreedman1992} it was claimed
that there is a smoothly embedded homology 4-disk $A$. The boundary
$\partial A$ is a homology 3-sphere with a non-trivial representation
of its fundamental group into $SO(3)$ (so $\partial A$ cannot be
diffeomorphic to a 3-sphere). According to Theorem \ref{thm:fail-h-cobordism-exotic-R4}
this homology 4-disk must be identified with the Akbulut cork of the
non-trivial h-cobordism. The cork $A$ is contractable and can be
(at least) build by one 1-handle and one 2-handle (case of a Mazur
manifold). Given a radial family $\mathbb{R}_{t}^{4}$ with radius
$r=\frac{1}{1-t}$ so that $t=1-\frac{1}{r}\subset CS\subset[0,1]$.
Suppose there is a diffeomorphism\[
(d,id_{K}):(\mathbb{R}_{s}^{4},K)\to(\mathbb{R}_{t}^{4},K)\qquad s\not=t\in CS\]
fixing the compact subset $K$. Then this map $d$ induces end-periodic
manifolds%
\footnote{We ignore the inclusion for simplicity.%
} $M\setminus(\bigcap_{i=0}^{\infty}d^{i}(\mathbb{R}_{s}^{4}))$ and
$M_{0}\setminus(\bigcap_{i=0}^{\infty}d^{i}(\mathbb{R}_{s}^{4}))$
which must be smoothable contradicting a theorem of Taubes \cite{Tau:87}.
Therefore $\mathbb{R}_{s}^{4}$ and $\mathbb{R}_{t}^{4}$ are non-diffeomorphic
for $t\not=s$ (except for countable many possibilities).

\subsection{Exotic $\mathbb{R}^{4}$ and codimension-1 foliations}

In this subsection we will construct a codimension-one foliation on
the boundary $\partial A$ of the cork with non-trivial Godbillon-Vey
invariant. The strategy of the proof goes like this: we use the foliation
of the design of the Casson handle (see subsection \ref{sub:design-of-CH})
for the radial family $\mathbb{R}_{t}^{4}$ to induce a foliated cobordism
$\partial A\times[0,1]$. The restriction to its boundary gives cobordant
codimension-1 foliations of $\partial A$ with non-trivial Godbillon-Vey
invariant $r^{2}=\frac{1}{(1-t)^{2}}$. In the subsection \ref{sub:Non-cobordant-foliations-S3}
we described a foliation of the 3-sphere unique up to foliated cobordism
for every given value of the Godbillon-Vey invariant. By theorem \ref{thm:foliation-3MF},
we get a corresponding foliation on $\partial A$ (with the same Godbillon-Vey
invariant). Finally we obtain:
\begin{theorem}
\label{thm:codim-1-foli-radial-fam}Given a radial family $R_{t}$
of small exotic $\mathbb{R}_{t}^{4}$ with radius $r$ and $t=1-\frac{1}{r}\subset CS\subset[0,1]$
induced from the non-product h-cobordism $W$ between $M$ and $M_{0}$
with Akbulut cork $A\subset M$ and $A\subset M_{0}$, respectively.
The radial family $R_{t}$ determines a family of codimension-one
foliations of $\partial A$ with Godbillon-Vey invariant $r^{2}$.
Furthermore given two exotic spaces $R_{t}$ and $R_{s}$, homeomorphic
but non-diffeomorphic to each other (and so $t\not=s$) then the two
corresponding codimension-one foliation of $\partial A$ are non-cobordant
to each other.\end{theorem}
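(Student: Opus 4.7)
The plan is to follow the outline already sketched just above the statement, realizing each step by combining the design foliation of subsection \ref{sub:design-of-CH} with the transport theorem \ref{thm:foliation-3MF}. First, I would decompose each member $R_t$ of the radial family as the Akbulut cork $A$ with a Casson handle $CH_t$ attached along $\partial A$, where the tree describing $CH_t$ grows with the radius $r=1/(1-t)$ (a larger radius exposes more of the tower structure of $CH$). On each $CH_t$ the design $S(Q_t)$ determines a codimension-one foliation by parallel copies of the frontier, as constructed in subsection \ref{sub:design-of-CH}. Since the frontier of each 5-stage tower meets $\partial A$ in a closed oriented 3-manifold and the inclusions of towers are transverse to $\partial A$, restricting the design foliation to a collar $\partial A\times[0,1]\subset R_t$ produces a codimension-one foliated cobordism whose two boundary foliations live on $\partial A$.

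Next I would identify the Godbillon-Vey invariant of this boundary foliation. The design for $CH_t$ embeds into the design for $CH_1$ as the sub-tree cut off by the radius function $\rho$, so the family $\{R_t\}$ provides a one-parameter rescaling of the parallel-line foliation of the disk $D^2$ depicted in Fig.~\ref{fig:foliation-of-the-disk}. Pulling this back to the $S^3$-model of subsection \ref{sub:Non-cobordant-foliations-S3}, the rescaling acts on the convex polygon $P$ in $\mathbb{H}^2$ whose area controls $\Gamma_{\mathcal{F}}=4\pi\cdot Area(P)$. Matching the scaling of the cone construction $U_Q$ from equation (\ref{eq:UQ-diffeo-CH}) with the hyperbolic area then yields the identification of the Godbillon-Vey number with $r^2=1/(1-t)^2$; this dimensional match is the computation I would expect to occupy the main technical effort.

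Having the foliation on the Reeb-like model in $S^3$ with Godbillon-Vey class $r^2$, I then invoke Theorem \ref{thm:foliation-3MF} to transport a representative of this cobordism class to $\partial A$. Because $\partial A$ is a homology 3-sphere, the isomorphism $H^3(\partial A,\mathbb{R})\cong H^3(S^3,\mathbb{R})\cong\mathbb{R}$ guarantees that the Godbillon-Vey invariant is carried over faithfully, so the resulting foliation $\mathcal{F}_{\partial A,t}$ has Godbillon-Vey invariant equal to $r^2$. This finishes the first assertion.

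For the second assertion, suppose $s\neq t$ in $CS$ and that $R_s,R_t$ are non-diffeomorphic. Then $r_s\neq r_t$, hence $r_s^2\neq r_t^2$, so the Godbillon-Vey invariants of $\mathcal{F}_{\partial A,s}$ and $\mathcal{F}_{\partial A,t}$ differ. By the cobordism invariance of the Godbillon-Vey class recalled in subsection \ref{sub:Non-cobordant-foliations-S3} (i.e.\ $\mathcal{F}_1\sim\mathcal{F}_2\Rightarrow\Gamma_{\mathcal{F}_1}=\Gamma_{\mathcal{F}_2}$), the two foliations cannot lie in the same foliated cobordism class, as claimed. The main obstacle I anticipate is the geometric identification of the radius parameter $r$ with the hyperbolic area of the polygon governing Thurston's construction; everything else amounts to packaging the design foliation as a foliated cobordism and applying the already-proved transport theorem.
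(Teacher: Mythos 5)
Your proposal has the right overall architecture (package the design foliation as a foliated cobordism over $\partial A\times[0,1]$, extract the boundary foliation, invoke Theorem \ref{thm:foliation-3MF}, then use cobordism invariance of the Godbillon--Vey class for the second assertion). However, there are two substantive gaps relative to what an actual proof requires.

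First, you work with the Casson-handle design directly. The paper explicitly does not: it replaces the Casson handle with a capped grope (subsection \ref{sub:Capped-gropes}), precisely because the gaps in the Casson-handle design have the form $S^{1}\times D^{3}/Wh$ and are therefore not manifolds, whereas for a capped grope the gaps are genuine $S^{1}\times D^{3}$'s with frontier $S^{1}\times D^{2}$. Without this replacement, the decomposition of a gap as $S^{1}\times D^{2}\times I$ (with the unit interval identified with the radius coordinate $\rho$, following Theorem 3.2 of DeMichelis--Freedman) is not available, and your ``restricting the design foliation to a collar'' step has nothing nice to restrict. Your remark that ``the frontier of each 5-stage tower meets $\partial A$ in a closed oriented 3-manifold'' is exactly the assertion that fails for Casson handles and is the reason for the switch.

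Second, and by your own admission, the key identification $GV=r^{2}$ is left as an anticipated ``dimensional match'' between the cone construction and hyperbolic area. The paper's actual mechanism is quite specific and is not a rescaling argument: in the decomposition $\partial(gap)=S^{1}\times S^{2}$ the disk $N$ around the north pole carries the induced design foliation; $N$ is determined up to conformal automorphism, so the foliation of $N$ is $PSL(2,\mathbb{R})$-invariant; this gives an isometric identification of the foliated $N$ with a foliated hyperbolic polygon $P$ (and $S^{1}\times N$ with $T_{1}P$); and because the unit interval of the gap is the radius coordinate, the radius $r$ of the 2-sphere containing $N$ controls the hyperbolic area with $vol(P)=r^{2}$, hence $GV=4\pi\cdot Area(P)$-type normalization gives the stated $r^{2}$. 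Your sketch correctly identifies where the work lies but supplies neither the $PSL(2,\mathbb{R})$-invariance argument nor the link between the gap's interval coordinate and the spherical radius; as written this step would not go through for the reader.

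The second assertion (non-cobordance of the foliations for $s\neq t$) you handle correctly and in essentially the same way as the paper.
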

\begin{proof}
The proof can be found in \cite{AsselmeyerKrol2009} and in the appendix
A. $\square$
\end{proof}
In the theorem \ref{thm:codim-1-foli-radial-fam} above we constructed
a relation between codimension-1 foliations on $\Sigma=\partial A$
and the radius for the radial family of small exotic $\mathbb{R}^{4}$.
By using theorem \ref{thm:foliation-3MF} we can trace back the foliation
on $\Sigma$ by a foliation on the 3-sphere $S^{3}$. This situation
can be seen differently by using the diffeomorphism $\Sigma=\Sigma\#S^{3}$.
Then, the foliation on $S^{3}$ induces a foliation on $\Sigma$ at
least partially. Thus, we have a 3-sphere $S^{3}$ lying at the boundary
$\partial A=\Sigma$ of the Akbulut cork $A$ inducing a codimension-1
foliation on $\Sigma$. Then by theorem \ref{thm:codim-1-foli-radial-fam}:
\begin{corollary}
Any class in $H^{3}(S^{3},\mathbb{R})$ induces a small exotic $\mathbb{R}^{4}$
where $S^{3}$ lies at the boundary $\Sigma=\partial A$ of the cork
$A$.
\end{corollary}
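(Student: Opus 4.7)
The plan is to read the chain of constructions underlying Theorem \ref{thm:codim-1-foli-radial-fam} in reverse, starting from cohomology and landing in the radial family of exotic $\mathbb{R}^{4}$'s. Fix a class $\gamma\in H^{3}(S^{3},\mathbb{R})\cong\mathbb{R}$. First I would invoke Thurston's construction from subsection \ref{sub:Non-cobordant-foliations-S3}: choose a convex polygon $P\subset\mathbb{H}^{2}$ whose area satisfies $4\pi\cdot\mathrm{Area}(P)$ equal to the real number representing $\gamma$ (recall $\mathrm{Area}(P)=(k-2)\pi-\sum\alpha_{i}$, so any positive value is realized by a suitable choice of vertex number $k$ and angles $\alpha_{i}$). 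This produces a codimension-$1$ foliation $\mathcal{F}_{P}$ of $S^{3}$ with Godbillon--Vey class equal to $\gamma$.

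Second, using the diffeomorphism $\Sigma=\Sigma\#S^{3}$ (with $\Sigma=\partial A$) together with Theorem \ref{thm:foliation-3MF}, transport $\mathcal{F}_{P}$ to a codimension-$1$ foliation $\mathcal{F}_{\Sigma}$ on $\partial A$ whose Godbillon--Vey class in $H^{3}(\Sigma,\mathbb{R})$ corresponds to $\gamma$ under the connected-sum identification. This is exactly the situation anticipated in the paragraph preceding the corollary: an $S^{3}$ summand sits at $\partial A$ and carries the chosen foliation.

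Third, apply Theorem \ref{thm:codim-1-foli-radial-fam} in the opposite direction. That theorem asserts that elements of the radial family $R_{t}$ correspond to codimension-$1$ foliations of $\partial A$ with Godbillon--Vey invariant $r^{2}=1/(1-t)^{2}$. Inverting this relation, the foliation $\mathcal{F}_{\Sigma}$ selects $r=\sqrt{\gamma}$ and therefore a parameter $t=1-1/r$, which in turn singles out the small exotic $\mathbb{R}^{4}$ denoted $R_{t}$ within the radial family. Because distinct values of $\gamma$ produce non-cobordant foliations of $\partial A$, the associated exotic $\mathbb{R}^{4}$'s are generically mutually non-diffeomorphic, matching the non-triviality statement in Theorem \ref{thm:codim-1-foli-radial-fam}.

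The main obstacle to be careful about is the matching of parameter ranges: the radial family realizes exotic smoothings only for $t$ in a Cantor subset $CS\subset[0,1]$, while $\gamma$ varies continuously over $\mathbb{R}$. The corollary, however, only claims the \emph{existence} of an induced small exotic $\mathbb{R}^{4}$ for each class, and this is achieved by choosing, for the given $\gamma$, an appropriate radial family (possibly with a different base cork or a different Casson handle attached to $\partial A$) whose Godbillon--Vey spectrum hits $\gamma$; the key input that such a family exists comes from the polygon freedom in Thurston's construction and from Theorem \ref{thm:foliation-3MF}, which together ensure that every value of $\gamma$ is realized on some $\partial A$ sitting inside an exotic $R_{t}$. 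Verifying this range-matching, and checking that the connected-sum decomposition $\Sigma=\Sigma\#S^{3}$ preserves the Godbillon--Vey class rather than merely its cobordism class, is the delicate step where all the preceding machinery must be used coherently.
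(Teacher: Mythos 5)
Your proposal follows exactly the chain of reasoning the paper uses (it gives no detailed proof, just the motivating paragraph preceding the corollary): start from a class $\gamma\in H^{3}(S^{3},\mathbb{R})$, realize it by Thurston's polygon construction as the Godbillon--Vey invariant of a foliation $\mathcal{F}_{P}$ of $S^{3}$, transport it to $\Sigma=\partial A$ via $\Sigma=\Sigma\#S^{3}$ and Theorem~\ref{thm:foliation-3MF}, and then run Theorem~\ref{thm:codim-1-foli-radial-fam} backwards to pick out a member of the radial family. In that sense your approach is the same as the paper's, only spelled out more explicitly.

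You do, however, put your finger on a genuine gap that the paper itself does not address. Theorem~\ref{thm:codim-1-foli-radial-fam} only assigns Godbillon--Vey values $r^{2}=1/(1-t)^{2}$ for $t$ ranging over a Cantor set $CS\subset[0,1]$, so a fixed radial family realizes only a Cantor subset of $[1,\infty)$, while $\gamma$ ranges over all of $\mathbb{R}$ (and Thurston's construction realizes every nonnegative value $4\pi\,\mathrm{Area}(P)$). Your proposed fix, namely varying the base cork or the attached Casson handle so that ``some'' radial family hits $\gamma$, is the right instinct, but as written it is an existence claim that is itself not proved; Theorem~\ref{thm:codim-1-foli-radial-fam} is stated for a single fixed h-cobordism $W$ and does not by itself guarantee that the union over all radial families covers every Godbillon--Vey value. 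To make the corollary rigorous one would need either to show that Thurston's polygon area can always be tuned to land in the Cantor spectrum of a given radial family, or to exhibit, for each $\gamma$, a concrete h-cobordism whose radial family achieves it. You correctly identify the second delicate point (whether the connected-sum identification preserves the Godbillon--Vey class or only its cobordism class); since Theorem~\ref{thm:foliation-3MF} is stated at the level of foliated-cobordism classes and the Godbillon--Vey class is a cobordism invariant, this one is actually fine. The range-matching issue, though, is left unresolved both in your proposal and in the paper.
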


\subsection{Integer Godbillon-Vey invariants and flat bundles\label{sub:Integer-Godbillon-Vey-invariants}}

Clearly the integer classes $H^{3}(S^{3},\mathbb{Z})\subset H^{3}(S^{3},\mathbb{R})$
are a subset of the full set and one can use the construction above
to get the foliation. Especially the polygon $P$ must be formed by
segments with angles $\alpha_{k}$ of integer value with respect to
$\pi$ to get an integer value for the volume $Area(P)=(k-2)\pi-\sum_{k}\alpha_{k}$
up to a $\pi-$factor. Using the work of Goldman and Brooks \cite{BrooksGoldman1984},
one can construct a foliation admitting an integer Godbillon-Vey invariant.
The corresponding foliation is induced by the unit tangent $T_{1}\mathbb{H}^{2}$
or by the action of the M�bius group $PSL(2,\mathbb{R})=SL(2,\mathbb{R})/\mathbb{Z}_{2}$
(Remark: $PSL(2,\mathbb{R})$ acts transitively on $T_{1}\mathbb{H}^{2}$
and so we can identify both spaces). The unit tangent bundle $T_{1}\mathbb{H}^{2}=PSL(2,\mathbb{R})$
is a circle bundle over $\mathbb{H}^{2}$ and we can construct the
universal cover, a real line bundle over $\mathbb{H}^{2}$, denoted
by $\widetilde{SL}(2,\mathbb{R})$. In subsection \ref{sub:Non-cobordant-foliations-S3}
we described Thurstons construction of a codimension-1 foliation $\mathcal{F}$.
In an intermediate step one has the manifold $M=(S^{2}\setminus\left\{ \mbox{\mbox{k} punctures}\right\} )\times S^{1}$
(with a foliation $\mathcal{F}$). This foliation $\mathcal{F}$ is
defined by a one-form $\omega$ together with two other 1-forms $\theta,\eta$
with\begin{equation}
d\omega=\theta\wedge\omega,\quad d\theta=\omega\wedge\eta,\quad d\eta=\eta\wedge\theta\label{eq:sl2-relations}\end{equation}
and Godbillon-Vey invariant $GV(\mathcal{F})=\theta\wedge d\theta=\omega\wedge\eta\wedge\theta$.
Now we show that the Godbillon-Vey invariant of this foliation $\mathcal{F}$
is an integer 3-form:
\begin{lemma}
\label{lem:integer-GV}Given a manifold $M$ with non-trivial fundamental
group $\pi_{1}(M)$ with foliation $\mathcal{F}$ defined by the 1-form
$\omega$ together with two 1-forms $\theta,\eta$ fulfilling the
relations (\ref{eq:sl2-relations}). If $M$ can be written as a flat
$PSL(2,\mathbb{R})-$bundle over a manifold $N$ with fiber $S^{1}$
and $\pi_{1}(N)\not=0$. Then the pairing of the Godbillon-Vey invariant
with the fundamental class $[M]\in H_{3}(M)$ is given by\begin{equation}
\langle GV(\mathcal{F}),[M]\rangle=\intop_{M}GV(\mathcal{F})=(4\pi)^{2}\cdot\chi(N)\label{eq:integer-GV}\end{equation}
with the Euler characteristics $\chi(M)$ of $N$. Up to a normalization
constant one obtains an integer value.\end{lemma}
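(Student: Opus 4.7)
The plan is to interpret the three forms $(\omega,\theta,\eta)$ as a Maurer--Cartan frame for $\mathfrak{sl}(2,\mathbb{R})$ and then to compute the fiber integral using the flat $PSL(2,\mathbb{R})$-bundle structure, reducing everything to a volume computation on the base $N$ to which the Gauss--Bonnet theorem can be applied. First I would observe that the relations (\ref{eq:sl2-relations}) are precisely the structure equations of $\mathfrak{sl}(2,\mathbb{R})$ in a Chevalley basis $\{H,E,F\}$ with $[H,E]=2E$, $[H,F]=-2F$, $[E,F]=H$; concretely, if $\Omega$ denotes the Maurer--Cartan form of $PSL(2,\mathbb{R})$ with components $(\omega^{H},\omega^{E},\omega^{F})$, then $(\omega,\theta,\eta)$ can be identified (up to a universal normalization) with $(\omega^{E},\omega^{H},\omega^{F})$. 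Since $M$ carries a flat $PSL(2,\mathbb{R})$-structure, these globally defined forms are pulled back from left-invariant forms on the group, and hence the 3-form $GV(\mathcal{F})=\omega\wedge\eta\wedge\theta$ agrees, up to a fixed normalizing constant $c$, with the bi-invariant Haar volume form on $PSL(2,\mathbb{R})$.

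Next I would exploit the circle-bundle structure $S^{1}\to M\to N$. Because the bundle is flat with structure group $PSL(2,\mathbb{R})$ and the fiber is a one-parameter subgroup, one of the three Maurer--Cartan forms, say $\eta$, restricts on each fiber to a bi-invariant generator of $H^{1}(S^{1})$; after normalizing so that $\int_{S^{1}}\eta=4\pi$ (the same normalization already used in subsection \ref{sub:Non-cobordant-foliations-S3} where Thurston obtains $\Gamma_{\mathcal{F}}=4\pi\cdot\mathrm{Area}(P)$), the remaining pair $\omega\wedge\theta$ descends to a 2-form $\widehat{\omega\wedge\theta}$ on $N$. The flatness of the connection guarantees that this descended 2-form is precisely the hyperbolic area form on $N$ coming from the identification of $\mathbb{H}^{2}$ with the homogeneous space $PSL(2,\mathbb{R})/PSO(2)$.

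Then I would apply Fubini/fiber integration and Gauss--Bonnet in sequence:
\begin{equation*}
\intop_{M}GV(\mathcal{F})=\intop_{M}\omega\wedge\eta\wedge\theta=\left(\intop_{S^{1}}\eta\right)\cdot\intop_{N}\widehat{\omega\wedge\theta}=4\pi\cdot\mathrm{Area}(N).
\end{equation*}
Because $N$ carries the constant-curvature $-1$ hyperbolic metric inherited from the $PSL(2,\mathbb{R})$-structure (the hypothesis $\pi_{1}(N)\neq 0$ ensures $N$ is not the sphere and a hyperbolic structure exists), Gauss--Bonnet gives $\mathrm{Area}(N)=-2\pi\chi(N)$, i.e.\ $|\mathrm{Area}(N)|=4\pi\,|\chi(N)|/2$, and combining this with the previous line produces $\langle GV(\mathcal{F}),[M]\rangle=(4\pi)^{2}\chi(N)$ up to sign, which matches (\ref{eq:integer-GV}) once the orientation convention is fixed. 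Integrality modulo the universal constant then follows from $\chi(N)\in\mathbb{Z}$.

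The main obstacle I anticipate is not the calculation itself but the identification step: one must verify that the hypothesis ``$M$ is a flat $PSL(2,\mathbb{R})$-bundle over $N$ with fiber $S^{1}$'' together with the structure equations (\ref{eq:sl2-relations}) really does force $(\omega,\theta,\eta)$ to be (pulled back from) the Maurer--Cartan frame with the standard normalization, so that no free constants are left unaccounted for when $\eta$ is integrated over the $S^{1}$-fiber and $\omega\wedge\theta$ is identified with the hyperbolic area form on $N$. Once that normalization is fixed, the rest is a clean application of fiber integration and Gauss--Bonnet.
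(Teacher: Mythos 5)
The paper does not actually prove Lemma~\ref{lem:integer-GV}; it cites \cite{AsselmeyerKrol2009}, which in turn rests on Brooks--Goldman \cite{BrooksGoldman1984}. So there is no in-paper proof to compare against, but your high-level strategy --- read the relations (\ref{eq:sl2-relations}) as Maurer--Cartan structure equations for $\mathfrak{sl}(2,\mathbb{R})$, recognize $GV(\mathcal{F})=\omega\wedge\eta\wedge\theta$ as a fixed multiple of the invariant volume form, fiber-integrate over the $S^{1}$-factor and finish with Gauss--Bonnet on $N$ --- is indeed the shape of the Brooks--Goldman argument for transversely homogeneous flat-bundle foliations, so the route you chose is the right one.

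There are, however, two genuine gaps. First, the roles of $\omega$ and $\eta$ in the fiber-integration step are reversed. Since $\omega$ is the \emph{defining} 1-form of $\mathcal{F}$ (it annihilates the tangent spaces of the leaves) and the leaves are the horizontal leaves of the flat bundle, hence transverse to the $S^{1}$-fibers, it is $\omega$ --- not $\eta$ --- that restricts nontrivially to each fiber and pairs with the fiber class; correspondingly the form that should descend to a 2-form on $N$ is $\eta\wedge\theta$, not $\omega\wedge\theta$. Second, even granting your normalizations, the final arithmetic does not produce the claimed constant: $4\pi\cdot\mathrm{Area}(N)=4\pi\cdot(-2\pi\chi(N))=-8\pi^{2}\chi(N)$, whereas $(4\pi)^{2}\chi(N)=16\pi^{2}\chi(N)$; these differ by a factor of $-2$, not just a sign, and your own intermediate rewriting $|\mathrm{Area}(N)|=\tfrac{1}{2}\cdot 4\pi|\chi(N)|$ already exhibits the missing $\tfrac{1}{2}$. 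So either the fiber period of the correct 1-form is $8\pi$ rather than $4\pi$, or an additional factor hides in the identification of the descended 2-form with the hyperbolic area form; you flag this normalization question as ``the main obstacle,'' and rightly so --- pinning down those two constants is exactly the content that remains to be proved, and is what the Brooks--Goldman computation supplies. As it stands, the proposal is a correct skeleton with the two normalization facts stated rather than established, and with an arithmetic slip that would need fixing once the normalizations are in hand.
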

\begin{proof}
The proof can be found in \cite{AsselmeyerKrol2009}. $\square$

Using this lemma we are able to obtain the special foliation (a la
Thurston) of the $S^{3}$ with integer Godbillon-Vey invariant.\end{proof}
\begin{theorem}
\label{thm:integer-GV-flat-bundles}Every $PSL(2,\mathbb{R})$ flat
bundle over $M=(S^{2}\setminus\left\{ \mbox{\mbox{k} punctures}\right\} )\times S^{1}$
defines a codimension-1 foliation of $M$ by the horizontal distribution
of the flat connection so that its (normalized) Godbillon-Vey invariant
is an integer given by\begin{equation}
\frac{1}{(4\pi)^{2}}\langle GV(\mathcal{F}),[M]\rangle=\pm\chi(N)=\pm(2-k)\quad.\label{eq:integer-GV-S3}\end{equation}
This foliation can be extended to the whole 3-sphere $S^{3}$ defining
an integer class in $H^{3}(S^{3},\mathbb{Z})$.\end{theorem}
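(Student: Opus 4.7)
The approach is a three-step construction: build the foliation as the horizontal distribution of the flat connection, apply Lemma \ref{lem:integer-GV} after matching the Maurer--Cartan equations to the required 1-form system, and finally extend the resulting foliation across the $k$ puncture tori by gluing in Reeb components whose Godbillon--Vey contribution vanishes.

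First I would construct $\mathcal{F}$. Put $N=S^{2}\setminus\{k\mbox{ punctures}\}$. A flat $PSL(2,\mathbb{R})$-bundle with $S^{1}$-fiber over $N$ is associated, via the natural $PSL(2,\mathbb{R})$-action on $\mathbb{RP}^{1}\cong S^{1}$, to a principal flat bundle classified by a representation $\rho:\pi_{1}(N)\to PSL(2,\mathbb{R})$. Since $N$ is an open surface with $H^{2}(N;\mathbb{Z})=0$ every $S^{1}$-bundle over $N$ is topologically trivial, so the total space is $M=N\times S^{1}$. The flat connection determines a horizontal codimension-one distribution $\mathcal{H}\subset TM$; flatness together with Frobenius makes $\mathcal{H}$ integrable and produces the advertised codimension-one foliation $\mathcal{F}$, whose leaves are the developments of the universal cover $\widetilde{N}$ into $M$.

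Second I would verify the hypotheses of Lemma \ref{lem:integer-GV}. Locally the flat connection pulls back the Maurer--Cartan form of $PSL(2,\mathbb{R})$. Using the standard basis $H,E,F$ of $\mathfrak{sl}(2,\mathbb{R})$ (with $[H,E]=2E$, $[H,F]=-2F$, $[E,F]=H$), the dual 1-forms $\omega_{H},\omega_{E},\omega_{F}$ satisfy $d\omega+\tfrac{1}{2}[\omega,\omega]=0$. A straightforward rescaling shows that the triple $(\omega,\theta,\eta)=(\omega_{E},-2\omega_{H},2\omega_{F})$ satisfies precisely (\ref{eq:sl2-relations}); moreover $\omega$ annihilates $\mathcal{H}$ since $E$ generates the fiber direction, so $\omega$ is a defining 1-form for $\mathcal{F}$. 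Lemma \ref{lem:integer-GV} now applies with base $N$ and yields
\[
\langle GV(\mathcal{F}),[M]\rangle=(4\pi)^{2}\,\chi(N)=(4\pi)^{2}(2-k),
\]
up to an orientation sign coming from the choice of fundamental class on $M$.

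Third I would extend $\mathcal{F}$ to $S^{3}$ following Thurston's procedure of subsection \ref{sub:Non-cobordant-foliations-S3}: fill each of the $k$ torus ends $S^{1}\times(\mbox{puncture})$ of $M$ with a solid torus carrying a Reeb foliation. The result is a smooth codimension-one foliation of $S^{3}$ extending $\mathcal{F}$. As recorded in subsection \ref{sub:factor-III-case}, the Reeb components correspond to the factor $I_{\infty}$ part and contribute nothing to the Godbillon--Vey class, hence $\langle GV,[S^{3}]\rangle=\langle GV(\mathcal{F}),[M]\rangle$. After dividing by $(4\pi)^{2}$ we obtain the integer $\pm(2-k)\in H^{3}(S^{3},\mathbb{Z})$, proving both assertions.

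The main obstacle I expect is the vanishing claim on Reeb components. Although the paper asserts this, a careful justification requires either choosing the defining 1-form on a Reeb solid torus so that $\theta\wedge d\theta$ vanishes pointwise (using the explicit Reeb model in which leaves asymptote to the boundary torus), or a Stokes-type argument reducing the would-be contribution to an integral over the boundary torus that vanishes by the fact that the gluing is transverse. The remaining ingredients --- matching normalizations in the Maurer--Cartan identification and fixing orientations on $M$ and its boundary tori --- are routine but must be done consistently to pin down the sign.
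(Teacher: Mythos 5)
Your proposal is correct and follows essentially the same route the paper sets up: the paper itself only defers to \cite{AsselmeyerKrol2009} for the proof, but the scaffolding here (Lemma~\ref{lem:integer-GV}, the relations~(\ref{eq:sl2-relations}), and the Thurston gluing of Reeb solid tori from subsection~\ref{sub:Non-cobordant-foliations-S3}) is precisely what you use. Your explicit Maurer--Cartan verification that $(\omega,\theta,\eta)=(\omega_{E},-2\omega_{H},2\omega_{F})$ satisfies~(\ref{eq:sl2-relations}) is a useful step the paper leaves implicit, and your flagged worry about the Reeb contribution is the right thing to pin down carefully --- the geometric fact behind the paper's ``factor $I_{\infty}$'' remark is simply that a Reeb component is null-cobordant as a foliated solid torus, so one can choose $\theta$ there with $\theta\wedge d\theta\equiv0$.
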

\begin{proof}
The proof can be found in \cite{AsselmeyerKrol2009}. The sign of
the integral depends on the orientation of the manifold $M$. $\square$

It is an important consequence of the work \cite{BrooksGoldman1984}
that the foliation $\mathcal{F}$ (and its induced counterpart for
the 3-sphere $S^{3}$) is rigid, i.e. a disturbance (or continuous
variation) does not change the Godbillon-Vey invariant. 
\end{proof}

\subsection{From exotic smoothness to operator algebras\label{sub:From-exotic-smoothness}}

In subsection \ref{sub:smooth-holonomy-groupoid} we constructed (following
Connes \cite{Connes94}) the smooth holonomy groupoid of a foliation
$F$ and its operator algebra $C_{r}^{*}(M,F)$. The correspondence
between a foliation and the operator algebra (as well as the von Neumann
algebra) is visualized by table \ref{tab:relation-foliation-operator}.
\begin{table}
\begin{tabular}{|c|c|}
\hline 
Foliation & Operator algebra\tabularnewline
\hline
\hline 
leaf & operator\tabularnewline
\hline 
closed curve transversal to foliation & projector (idempotent operator)\tabularnewline
\hline 
holonomy & linear functional (state)\tabularnewline
\hline 
local chart & center of algebra\tabularnewline
\hline
\end{tabular}

\caption{relation between foliation and operator algebra\label{tab:relation-foliation-operator}}

\end{table}
 As extract of our previous paper \cite{AsselmeyerKrol2009}, we obtained
a relation between exotic $\mathbb{R}^{4}$'s and codimension-1 foliations
of the 3-sphere $S^{3}$. For a codimension-1 foliation there is the
Godbillon-Vey invariant \cite{GodVey:71} as element of $H^{3}(M,\mathbb{R})$.
Hurder and Katok \cite{HurKat:84} showed that the $C^{*}$algebra
of a foliation with non-trivial Godbillon-Vey invariant contains a
factor $I\! I\! I$ subalgebra (by the Anosov-like foliation). Using
Tomita-Takesaki-theory, one has a continuous decomposition (as crossed
product) of any factor $I\! I\! I$ algebra $M$ into a factor $I\! I_{\infty}$
algebra $N$ together with a one-parameter group%
\footnote{The group $\mathbb{R}_{+}^{*}$ is the group of positive real numbers
with multiplication as group operation also known as Pontrjagin dual.%
} $\left(\theta_{\lambda}\right)_{\lambda\in\mathbb{R}_{+}^{*}}$ of
automorphisms $\theta_{\lambda}\in Aut(N)$ of $N$, i.e. one obtains

\[
M=N\rtimes_{\theta}\mathbb{R}_{+}^{*}\quad.\]
But that means, there is a foliation induced from the foliation of
the $S^{3}$ producing this $I\! I_{\infty}$ factor. As we saw in
subsection \ref{sub:factor-III-case} one has a codimension-1 foliation
$F$ as part of the foliation of the $S^{3}$ whose von Neumann algebra
is the hyperfinite factor $I\! I\! I_{1}$. Connes \cite{Connes94}
(in section I.4 page 57ff) constructed the foliation $F'$ canonically
associated to $F$ having the factor $I\! I_{\infty}$ as von Neumann
algebra. In our case it is the horocycle flow: Let $P$ the polygon
on the hyperbolic space $\mathbb{H}^{2}$ determining the foliation
of the $S^{3}$ (see subsection ). $P$ is equipped with the hyperbolic
metric $2|dz|/(1-|z|^{2})$ together with the collection $T_{1}P$
of unit tangent vectors to $P$. A horocycle in $P$ is a circle contained
in $P$ which touches $\partial P$ at one point (see Fig. \ref{fig:horocycle-fig}).
\begin{figure}
\includegraphics[scale=0.25]{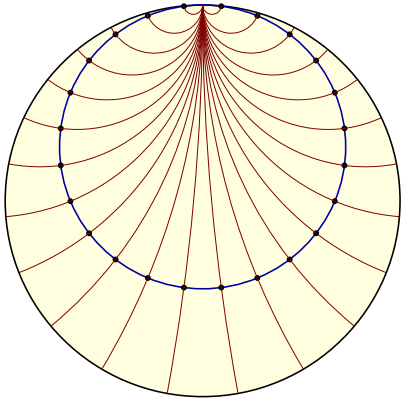}

\caption{horocycle, a curve whose normals all converge asymptotically\label{fig:horocycle-fig}}

\end{figure}
 Then the horocycle flow $T_{1}P\to T_{1}P$ is the flow moving an
unit tangent vector along a horocycle (in positive direction at unit
speed). As above the polygon $P$ determines a surface $S$ of genus
$g>1$ with abelian torsion-less fundamental group $\pi_{1}(S)$ so
that the homomorphism $\pi_{1}(S)\to\mathbb{R}$ determines an unique
(ergodic invariant) Radon measure. Finally the horocycle flow determines
a factor $I\! I_{\infty}$ foliation associated to the factor $I\! I\! I_{1}$
foliation. We remark for later usage that this foliation is determined
by a set of closed curves (the horocycles). Using results of previous
papers and subsections above, we have the following picture:
\begin{enumerate}
\item Every small exotic $\mathbb{R}^{4}$ is determined by a codimension-1
foliation (unique up to cobordisms) of some homology 3-sphere $\Sigma$
(as boundary $\partial A=\Sigma$ of a contractable submanifold $A\subset\mathbb{R}^{4}$,
the Akbulut cork). (see Theorem \ref{thm:codim-1-foli-radial-fam})
\item This codimension-1 foliation on $\Sigma$ determines via surgery along
a link uniquely a codimension-1 foliation on the 3-sphere and vice
verse. (see Theorem \ref{thm:foliation-3MF})
\item This codimension-1 foliation $(S^{3},F)$ on $S^{3}$ has a leaf space
which is determined by the von Neumann algebra $W(S^{3},F)$ associated
to the foliation. (see Connes \cite{Connes1984})
\item The von Neumann algebra $W(S^{3},F)$ contains a hyperfinite factor
$I\! I\! I_{1}$ algebra as well as a factor $I_{\infty}$ algebra
coming from the Reeb foliations. (see Hurder and Katok \cite{HurKat:84})
\end{enumerate}
Thus by this procedure we get a noncommutative algebra from an exotic
$\mathbb{R}^{4}$. The relation to the quantum theory will be discussed
now. We remark that we have already a quantum theory represented by
the von Neumann algebra $W(S^{3},F)$. Thus we are in the strange
situation to construct a (classical) Poisson algebra together with
a quantization to get an algebra which we already have.

\section{The connection between exotic smoothness and quantization\label{sec:Quantization}}

In this section we describe a deep relation between quantization and
the codimension-1 foliation of the $S^{3}$ determining the smoothness
structure on a small exotic $\mathbb{R}^{4}$. Her and in the following
\emph{we will identify the leaf space with its operator algebra}.

\subsection{Idempotent operators, closed curves in surfaces and knot cobordisms\label{sub:Idempotent-operators-knot-cobordism}}

In subsection \ref{sub:smooth-holonomy-groupoid}, an idempotent was
constructed in the $C^{*}$ algebra of the foliation and geometrically
interpreted as closed curve transversal to the foliation. Such a curve
meets every leaf in a finite number of points. Furthermore the foliation
on the 3-sphere $S^{3}$ is determined up to foliated cobordisms (see
Theorem \ref{thm:codim-1-foli-radial-fam}), i.e. a 4-space which
looks like $S^{3}\times[0,1]$. Then we have a cobordism of two curves
which looks like a thickened curve $S^{1}\times[0,1]$. The foliation
of the 3-sphere is determined by a polygon $P$ (see subsection \ref{sub:Non-cobordant-foliations-S3})
laminated by curves starting and ending at the boundary of $P$ (see
Fig. \ref{fig:horocycle-fig}). Without loss of generality we can
assume that $P$ consists of an even number of vertices, say $2k$.
By the uniformization theorem of surfaces, there is an unique surface
$S$ of genus $g>1$ with $g=[k/2]$ (with one boundary component
for $k$ odd) represented by $P$. The closed curves at $S$ transversal
to the foliation are represented by lines perpendicular to the leafs
in the foliation of $P$ (see Fig. \ref{fig:closed-curves-in-P}).%
\begin{figure}
\includegraphics[scale=0.2]{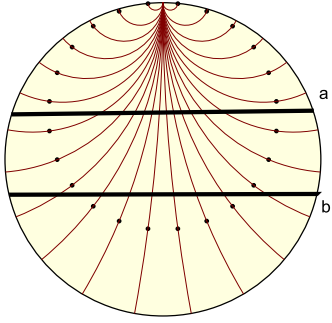}

\caption{$a,b$ are examples of closed curves in $P$ where the polygon is
simplified by a circle\label{fig:closed-curves-in-P}}

\end{figure}
 In the process from $P$ to the surface $S$ (via the identification
of sides of $P$) these lines in $P$ closes to curves in $S$. How
do these curves look like? Usually a closed curve in the foliation
of the 3-sphere is given by an embedding $S^{1}\to S^{3}$ where the
normal of this curve is in the direction of the leaf. This embedding
is also known as a knot. In the process from $P$ to $S$, we project
this knot to the surface and get a closed curve with singularities
(instead of crossings). Then this closed curve is represented by a
line (in $P$) perpendicular to the flow lines of the foliation. Now
we will state the following theorem:
\begin{theorem}
\label{thm:idempotent-operators-as-closed-curves}Given a codimension-1
foliation of the 3-sphere represented by a polygon $P$ with $4k$
vertices. The corresponding Anosov foliation of the unit tangent bundle
$T_{1}S$ of a surface $S$ with genus $g=k$ admits idempotent operators
in the leaf space given by closed curves in $S$ with self-intersections.
Two closed curves are equivalent if there is an isotopy between both
curves.\end{theorem}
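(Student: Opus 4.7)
The plan is to invoke the general construction of idempotents in $C^{*}_{r}(M,F)$ recorded in subsection \ref{sub:smooth-holonomy-groupoid}: any compact submanifold $N\subset M$ whose dimension equals $\mathrm{codim}(F)$ and which is everywhere transverse to $F$ gives rise to a canonical idempotent $e\in C^{*}_{r}(M,F)$ through the half-density section $\xi$ supported near the diagonal in $t^{-1}(N)$. Since $T_{1}S$ is three-dimensional and the Anosov foliation has codimension one, the relevant transversals are closed $1$-manifolds in $T_{1}S$, i.e.\ embedded circles (or finite disjoint unions thereof). Each such circle immediately yields an idempotent in the leaf-space algebra, and this is the geometric source of the operators to be identified with curves on $S$.

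The next step is to pass from transversals in $T_{1}S$ to curves in $S$ via the circle-bundle projection $\pi:T_{1}S\to S$. An embedded transversal $\tilde{\gamma}\subset T_{1}S$ projects to an immersed closed curve $\pi\circ\tilde{\gamma}$ in $S$: wherever two points of $\tilde{\gamma}$ share the same base point but carry distinct unit tangent directions, the image acquires an ordinary self-intersection. Conversely, an immersed closed curve $c:S^{1}\to S$ lifts canonically by $s\mapsto \dot c(s)/\|\dot c(s)\|$; generically the lifted tangent vectors at two preimages of a self-intersection disagree, so the lift is embedded. Transversality to the horocyclic foliation follows because the horocycles are stable manifolds of the geodesic flow, whereas the tangent lift of a generic immersed curve has a nontrivial component in the geodesic direction, transverse to the horocycles. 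This sets up a bijection between immersed closed curves on $S$ (in generic position) and isotopy classes of transversal embedded curves in $T_{1}S$, each producing an idempotent in $W(T_{1}S,F)$.

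Finally I would address the equivalence relation. If $c_{0}$ and $c_{1}$ are isotopic immersed curves in $S$, the isotopy lifts to a smooth family $\tilde{c}_{t}$ of transversals in $T_{1}S$, and the associated half-density sections $\xi_{t}$ yield a norm-continuous path of idempotents $e_{t}\in C^{*}_{r}(T_{1}S,F)$. A continuous path of idempotents in a $C^{*}$-algebra is Murray--von Neumann equivalent, so $e_{0}$ and $e_{1}$ represent the same class in the leaf-space algebra. The converse, that equivalent idempotents arising from embedded transversals descend to isotopic curves on $S$, follows by projecting the connecting unitary path back through $\pi$ and invoking the openness of transversality.

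The main obstacle is the treatment of self-intersections. At a double point of $c$ the two tangent lifts could accidentally coincide (for instance at a cusp or tangency), destroying the embeddedness of $\tilde{c}$ in $T_{1}S$ and thereby ruining the hypothesis of the idempotent construction. The remedy is a genericity argument: one restricts to immersions in general position (which form an open-dense subset of the space of smooth maps $S^{1}\to S$ by Whitney's immersion theorem) and uses the openness of transversality to propagate the condition along any isotopy, perturbing slightly if needed. With this in place, the bijection between isotopy classes of immersed curves in $S$ and equivalence classes of idempotents becomes well-defined, which completes the proof.
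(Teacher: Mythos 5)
Your proposal follows the same route as the paper: invoke the subsection-\ref{sub:smooth-holonomy-groupoid} construction that assigns an idempotent to any closed curve transverse to the foliation, then observe that self-intersections arise only after projecting from $T_{1}S$ down to $S$, and finally use isotopy as the natural equivalence. The paper's own proof is essentially a three-sentence sketch citing exactly these facts, whereas you spell out the circle-bundle projection $\pi:T_{1}S\to S$, the canonical tangent lift, the genericity needed to keep the lift embedded, and the standard fact that a norm-continuous path of idempotents yields Murray--von Neumann equivalence, all of which tighten an argument the paper leaves implicit. One caveat: the ``converse'' at the end of your third paragraph, where you speak of ``projecting the connecting unitary path back through $\pi$,'' is not meaningful --- the unitary path lives in the $C^{*}$-algebra, not in $T_{1}S$, and cannot be pushed through $\pi$ --- but since the theorem only asserts that isotopic curves give equivalent idempotents (not the reverse implication), that step is superfluous and its failure does not affect the proof of what is actually claimed.
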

\begin{proof}
The proof is a simple combination of results presented above. By definition
(see \ref{sub:smooth-holonomy-groupoid}), one has an idempotent operator
in the leaf space of the foliation which is a closed curve transversal
to the foliation. These curves are an embedding $S^{1}\to S^{3}$
known as knot. A deformation between two curves is an isotopy. Then
both curves are equivalent. $\square$

Thus we have to consider closed curves in surfaces. 
\end{proof}

\subsection{The observable algebra and Poisson structure\label{sub:The-observable-algebra}}

In this section we will describe the formal structure of a classical
theory coming from the algebra of observables using the concept of
a Poisson algebra. In quantum theory, an observable is represented
by a hermitean operator having the spectral decomposition via projectors
or idempotent operators. The coefficient of the projector is the eigenvalue
of the observable or one possible result of a measurement. At least
one of these projectors represent (via the GNS representation) a quasi-classical
state. Thus to construct the substitute of a classical observable
algebra with Poisson algebra structure we have to concentrate on the
idempotents in the $C^{*}$ algebra. Now we will see that the set
of closed curves on a surface has the structure of a Poisson algebra.
Let us start with the definition of a Poisson algebra. 
\begin{definition}
Let $P$ be a commutative algebra with unit over $\mathbb{R}$ or
$\mathbb{C}$. A \emph{Poisson bracket} on $P$ is a bilinearform
$\left\{ \:,\:\right\} :P\otimes P\to P$ fulfilling the following
3 conditions:

anti-symmetry $\left\{ a,b\right\} =-\left\{ b,a\right\} $

Jacobi identity $\left\{ a,\left\{ b,c\right\} \right\} +\left\{ c,\left\{ a,b\right\} \right\} +\left\{ b,\left\{ c,a\right\} \right\} =0$

derivation $\left\{ ab,c\right\} =a\left\{ b,c\right\} +b\left\{ a,c\right\} $.\\
Then a \emph{Poisson algebra} is the algebra $(P,\{\,,\,\})$.
\end{definition}
Now we consider a surface $S$ together with a closed curve $\gamma$.
Additionally we have a Lie group $G$ given by the isometry group.
The closed curve is one element of the fundamental group $\pi_{1}(S)$.
From the theory of surfaces we know that $\pi_{1}(S)$ is a free abelian
group. Denote by $Z$ the free $\mathbb{K}$-module ($\mathbb{K}$
a ring with unit) with the basis $\pi_{1}(S)$, i.e. $Z$ is a freely
generated $\mathbb{K}$-modul. Recall Goldman's definition of the
Lie bracket in $Z$ (see \cite{Goldman1984}). For a loop $\gamma:S^{1}\to S$
we denote its class in $\pi_{1}(S)$ by $\left\langle \gamma\right\rangle $.
Let $\alpha,\beta$ be two loops on $S$ lying in general position.
Denote the (finite) set $\alpha(S^{1})\cap\beta(S^{1})$ by $\alpha\#\beta$.
For $q\in\alpha\#\beta$ denote by $\epsilon(q;\alpha,\beta)=\pm1$
the intersection index of $\alpha$ and $\beta$ in $q$. Denote by
$\alpha_{q}\beta_{q}$ the product of the loops $\alpha,\beta$ based
in $q$. Up to homotopy the loop $(\alpha_{q}\beta_{q})(S^{1})$ is
obtained from $\alpha(S^{1})\cup\beta(S^{1})$ by the orientation
preserving smoothing of the crossing in the point $q$. Set \begin{equation}
[\left\langle \alpha\right\rangle ,\left\langle \beta\right\rangle ]=\sum_{q\in\alpha\#\beta}\epsilon(q;\alpha,\beta)(\alpha_{q}\beta_{q})\quad.\label{eq:Lie-bracket-loops}\end{equation}
According to Goldman \cite{Goldman1984}, Theorem 5.2, the bilinear
pairing $[\,,\,]:Z\times Z\to Z$ given by (\ref{eq:Lie-bracket-loops})
on the generators is well defined and makes $Z$ to a Lie algebra.
The algebra $Sym(Z)$ of symmetric tensors is then a Poisson algebra
(see Turaev \cite{Turaev1991}).

The whole approach seems natural for the construction of the Lie algebra
$Z$ but the introduction of the Poisson structure is an artificial
act. From the physical point of view, the Poisson structure is not
the essential part of classical mechanics. More important is the algebra
of observables, i.e. functions over the configuration space forming
the Poisson algebra. Thus we will look for the algebra of observables
in our case. For that purpose, we will look at geometries over the
surface. By the uniformization theorem of surfaces, there is three
types of geometrical models: spherical $S^{2}$, Euclidean $\mathbb{E}^{2}$
and hyperbolic $\mathbb{H}^{2}$. Let $\mathcal{M}$ be one of these
models having the isometry group $Isom(\mathcal{M})$. Consider a
subgroup $H\subset Isom(\mathcal{M})$ of the isometry group acting
freely on the model $\mathcal{M}$ forming the factor space $\mathcal{M}/H$.
Then one obtains the usual (closed) surfaces $S^{2}$, $\mathbb{R}P^{2}$,
$T^{2}$ and its connected sums like the surface of genus $g$ ($g>1$).
For the following construction we need a group $G$ containing the
isometry groups of the three models. Furthermore the surface $S$
is part of a 3-manifold and for later use we have to demand that $G$
has to be also a isometry group of 3-manifolds. According to Thurston
\cite{Thu:97} there are 8 geometric models in dimension 3 and the
largest isometry group is the hyperbolic group $PSL(2,\mathbb{C})$
isomorphic to the Lorentz group $SO(3,1).$ It is known that every
representation of $PSL(2,\mathbb{C})$ can be lifted to the spin group
$SL(2,\mathbb{C})$. Thus the group $G$ fulfilling all conditions
is identified with $SL(2,\mathbb{C})$. This choice fits very well
with the 4-dimensional picture.

Now we introduce a principal $G$ bundle on $S$, representing a geometry
on the surface. This bundle is induced from a $G$ bundle over $S\times[0,1]$
having always a flat connection. Alternatively one can consider a
homomorphism $\pi_{1}(S)\to G$ represented as holonomy functional\begin{equation}
hol(\omega,\gamma)=\mathcal{P}\exp\left(\int\limits _{\gamma}\omega\right)\in G\label{eq:holonomy-definition}\end{equation}
with the path ordering operator $\mathcal{P}$ and $\omega$ as flat
connection (i.e. inducing a flat curvature $\Omega=d\omega+\omega\wedge\omega=0$).
This functional is unique up to conjugation induced by a gauge transformation
of the connection. Thus we have to consider the conjugation classes
of maps\[
hol:\pi_{1}(S)\to G\]
forming the space $X(S,G)$ of gauge-invariant flat connections of
principal $G$ bundles over $S$. Now (see \cite{Skovborg2006}) we
can start with the construction of the Poisson structure on $X(S,G).$
The construction based on the Cartan form as the unique bilinearform
of a Lie algebra. As discussed above we will use the Lie group $G=SL(2,\mathbb{C})$
but the whole procedure works for every other group too. Now we consider
the standard basis\[
X=\left(\begin{array}{cc}
0 & 1\\
0 & 0\end{array}\right)\quad,\qquad H=\left(\begin{array}{cc}
1 & 0\\
0 & -1\end{array}\right)\quad,\qquad Y=\left(\begin{array}{cc}
0 & 0\\
1 & 0\end{array}\right)\]
of the Lie algebra $sl(2,\mathbb{C})$ with $[X,Y]=H,\,[H,X]=2X,\,[H,Y]=-2Y$.
Furthermore there is the bilinearform $B:sl_{2}\otimes sl_{2}\to\mathbb{C}$
written in the standard basis as \[
\left(\begin{array}{ccc}
0 & 0 & -1\\
0 & -2 & 0\\
-1 & 0 & 0\end{array}\right)\]
Now we consider the holomorphic function $f:SL(2,\mathbb{C})\to\mathbb{C}$
and define the gradient $\delta_{f}(A)$ along $f$ at the point $A$
as $\delta_{f}(A)=Z$ with $B(Z,W)=df_{A}(W)$ and \[
df_{A}(W)=\left.\frac{d}{dt}f(A\cdot\exp(tW))\right|_{t=0}\quad.\]
The calculation of the gradient $\delta_{tr}$ for the trace $tr$
along a matrix \[
A=\left(\begin{array}{cc}
a_{11} & a_{12}\\
a_{21} & a_{22}\end{array}\right)\]
 is given by\[
\delta_{tr}(A)=-a_{21}Y-a_{12}X-\frac{1}{2}(a_{11}-a_{22})H\quad.\]
Given a representation $\rho\in X(S,SL(2,\mathbb{C}))$ of the fundamental
group and an invariant function $f:SL(2,\mathbb{C})\to\mathbb{R}$
extendable to $X(S,SL(2,\mathbb{C}))$. Then we consider two conjugacy
classes $\gamma,\eta\in\pi_{1}(S)$ represented by two transversal
intersecting loops $P,Q$ and define the function $f_{\gamma}:X(S,SL(2,\mathbb{C})\to\mathbb{C}$
by $f_{\gamma}(\rho)=f(\rho(\gamma))$. Let $x\in P\cap Q$ be the
intersection point of the loops $P,Q$ and $c_{x}$ a path between
the point $x$ and the fixed base point in $\pi_{1}(S)$. The we define
$\gamma_{x}=c_{x}\gamma c_{x}^{-1}$ and $\eta_{x}=c_{x}\eta c_{x}^{-1}$.
Finally we get the Poisson bracket\[
\left\{ f_{\gamma},f'_{\eta}\right\} =\sum_{x\in P\cap Q}sign(x)\: B(\delta_{f}(\rho(\gamma_{x})),\delta_{f'}(\rho(\eta_{x})))\quad,\]
where $sign(x)$ is the sign of the intersection point $x$. Thus,
\begin{theorem}
The space $X(S,SL(2,\mathbb{C}))$ has a natural Poisson structure
(induced by the bilinear form (\ref{eq:Lie-bracket-loops}) on the
group) and the Poisson algebra \emph{$(X(S,SL(2,\mathbb{C}),\left\{ \,,\,\right\} )$}
of complex functions over them is the algebra of observables. 
\end{theorem}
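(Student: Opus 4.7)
The plan is to establish the theorem in two stages: first verify that the explicit formula already introduced for $\{f_\gamma,f'_\eta\}$ descends to a well-defined bilinear map on $X(S,SL(2,\mathbb{C}))$ and satisfies the three axioms of a Poisson bracket, then identify the resulting Poisson algebra with the classical observable algebra coming from the idempotents of the leaf-space $C^*$-algebra. Throughout I would lean on Goldman's original construction (the formula (\ref{eq:Lie-bracket-loops}) is precisely the Lie bracket he used on the free module on $\pi_1(S)$), reinterpreted via the Ad-invariant bilinear form $B$ on $sl(2,\mathbb{C})$.

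First I would check well-definedness. The bracket is a priori defined for trace-type functions $f_\gamma$ built from invariant $f\colon SL(2,\mathbb{C})\to\mathbb{C}$ and loops in general position. Independence of the auxiliary path $c_x$ follows because $B$ is Ad-invariant, so conjugating both arguments of $B(\delta_f(\rho(\gamma_x)),\delta_{f'}(\rho(\eta_x)))$ by $\rho(c_x)$ leaves the result fixed. Independence of the chosen representatives $P,Q$ of $\gamma,\eta$ and of the gauge class of $\rho$ is standard: homotoping $P$ or $Q$ creates pairs of intersection points of opposite sign whose contributions cancel (because $B$ is Ad-invariant and $f$ is class-invariant), and a gauge transformation acts by simultaneous conjugation on all $\rho(\gamma_x)$, again absorbed by Ad-invariance of $B$.

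Next I would check the three Poisson axioms on the generators $f_\gamma$. Antisymmetry is immediate from $\mathrm{sign}(x;P,Q)=-\mathrm{sign}(x;Q,P)$ together with the symmetry of $B$. The Leibniz rule is inherited from the Leibniz rule for the gradient $\delta$ applied to a product of invariant functions, since each intersection contributes independently. The hard axiom, and the one I expect to be the main obstacle, is the Jacobi identity. For three loops $\alpha,\beta,\gamma$ in general position one rewrites the cyclic sum $\{f_\alpha,\{f'_\beta,f''_\gamma\}\}+\mathrm{cyc.}$ as a sum over ordered triples $(x,y)$ of intersection points (say $x\in P\cap Q$, $y\in (P\cup_x Q)\cap R$), and one has to show the total vanishes. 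This is where Ad-invariance of $B$ plays its decisive role: writing $B$ in the standard basis and using $\langle \delta_f,[Z,\cdot]\rangle=df\cdot\mathrm{ad}(Z)$ one can reorganize the triple-intersection contributions at each point as the cocycle relation for the adjoint representation, which vanishes identically. Concretely I would follow Goldman (Theorem~5.2 in \cite{Goldman1984}) and extend his argument using the invariant form $B$ in place of the group-ring bracket; alternatively one can appeal to the Atiyah--Bott symplectic picture, in which $X(S,SL(2,\mathbb{C}))$ at smooth points inherits a holomorphic symplectic form from $\Omega(\omega_1,\omega_2)=\int_S B(\omega_1\wedge\omega_2)$ on the space of flat connections, and the Jacobi identity follows from $d\Omega=0$.

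Finally I would extend the bracket from the generators $f_\gamma$ to all holomorphic functions on $X(S,SL(2,\mathbb{C}))$ by a density argument: by classical invariant theory the trace functions $\rho\mapsto\mathrm{tr}\,\rho(\gamma)$ generate the coordinate ring of the character variety, so a Poisson bracket on generators satisfying Leibniz extends uniquely to the whole algebra $\mathcal{O}(X(S,SL(2,\mathbb{C})))$, and this is the Poisson algebra $(X(S,SL(2,\mathbb{C})),\{\,,\,\})$ of the statement. To identify it with the algebra of classical observables I would invoke Theorem \ref{thm:idempotent-operators-as-closed-curves}: the idempotents of the foliation $C^*$-algebra are represented by closed curves on $S$, and their images under the holonomy functional (\ref{eq:holonomy-definition}) generate precisely the trace functions $f_\gamma$. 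Since, as recalled before the statement, the idempotents play the role of the classical shadows of quantum observables (via spectral decomposition and GNS), the Poisson algebra $\mathcal{O}(X(S,SL(2,\mathbb{C})))$ is the correct classical observable algebra. The delicate point in this last step is handling the singular locus of the character variety (reducible representations), which one either excises or treats with the stratified Poisson structure of Huebschmann; for the purpose of identifying observables supported by transverse closed curves this refinement is not needed.
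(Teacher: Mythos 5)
Your proposal follows essentially the same route the paper takes in the exposition that precedes the theorem statement: the Goldman intersection bracket combined with the Ad-invariant form $B$ on $sl(2,\mathbb{C})$, with well-definedness and the Jacobi identity outsourced to Goldman's Theorem 5.2, and the identification with classical observables made through the idempotents of the foliation algebra represented as closed curves and their trace-of-holonomy functions. The paper in fact states this theorem with no standalone proof (the construction before it plus the citations to Goldman, Turaev, and Skovborg serve as the justification), so your sketch is actually more explicit than the source; the extra remarks on the Atiyah--Bott symplectic picture and the singular locus of the character variety are correct refinements that the paper silently suppresses.
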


\subsection{Drinfeld-Turaev Quantization\label{sub:Drinfeld-Turaev-Quantization}}

Now we introduce the ring $\mathbb{C}[[h]]$ of formal polynomials
in $h$ with values in $\mathbb{C}$. This ring has a topological
structure, i.e. for a given power series $a\in\mathbb{C}[[h]]$ the
set $a+h^{n}\mathbb{C}[[h]]$ forms a neighborhood. Now we define 
\begin{definition}
A \emph{Quantization} of a Poisson algebra $P$ is a $\mathbb{C}[[h]]$
algebra $P_{h}$ together with the $\mathbb{C}$-algebra isomorphism
$\Theta:P_{h}/hP\to P$ so that

1. the modul $P_{h}$ is isomorphic to $V[[h]]$ for a $\mathbb{C}$
vector space $V$

2. let $a,b\in P$ and $a',b'\in P_{h}$ be $\Theta(a)=a'$, $\Theta(b)=b'$
then\[
\Theta\left(\frac{a'b'-b'a'}{h}\right)=\left\{ a,b\right\} \]

\end{definition}
One speaks of a deformation of the Poisson algebra by using a deformation
parameter $h$ to get a relation between the Poisson bracket and the
commutator. Therefore we have the problem to find the deformation
of the Poisson algebra $(X(S,SL(2,\mathbb{C})),\left\{ \,,\,\right\} )$.
The solution to this problem can be found via two steps: 
\begin{enumerate}
\item at first find another description of the Poisson algebra by a structure
with one parameter at a special value and 
\item secondly vary this parameter to get the deformation. 
\end{enumerate}
Fortunately both problems were already solved (see \cite{Turaev1989,Turaev1991}).
The solution of the first problem is expressed in the theorem: 
\begin{theorem}
The Skein modul $K_{-1}(S\times[0,1])$ (i.e. $t=-1$) has the structure
of an algebra isomorphic to the Poisson algebra $(X(S,SL(2,\mathbb{C}),\left\{ \,,\,\right\} )$.\emph{
}(see also \cite{BulPrzy:1999,Bullock1999}) 
\end{theorem}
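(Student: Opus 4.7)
The plan is to exhibit an explicit algebra isomorphism between $K_{-1}(S\times[0,1])$ and the coordinate ring $\mathbb{C}[X(S,SL(2,\mathbb{C}))]$, following the Bullock--Przytycki--Sikora trace-function construction. First I would recall that $K_t(S\times[0,1])$ is the $\mathbb{C}[t^{\pm1}]$-module freely generated by isotopy classes of framed unoriented links in $S\times[0,1]$, modulo the Kauffman skein relation (a crossing equals $t$ times the $0$-smoothing plus $t^{-1}$ times the $\infty$-smoothing) and the framed-unknot relation (an unknot component contributes the factor $-(t^{2}+t^{-2})$). Vertical stacking gives $K_t(S\times[0,1])$ an algebra structure. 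At $t=-1$ the two smoothings enter the Kauffman relation symmetrically, so framing becomes immaterial and the stacking product becomes commutative, consistent with the commutative target; a class in $K_{-1}(S\times[0,1])$ is then represented by a formal $\mathbb{C}$-linear combination of immersed multicurves on $S$ with transverse double points.

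Next I would define $\Phi\colon K_{-1}(S\times[0,1])\to\mathbb{C}[X(S,SL(2,\mathbb{C}))]$ on a link $L=\gamma_{1}\sqcup\cdots\sqcup\gamma_{n}$ by $\Phi(L)=(-1)^{n}\prod_{i=1}^{n}tr_{\gamma_{i}}$, where $tr_{\gamma}(\rho):=tr(\rho(\gamma))$ for any representative $\rho$ of a class in $X(S,SL(2,\mathbb{C}))$. Well-definedness rests on the fundamental $SL(2,\mathbb{C})$ trace identity $tr(A)\,tr(B)=tr(AB)+tr(AB^{-1})$: at a crossing between two strands whose holonomies are $A$ and $B$, the two smoothings carry holonomies $AB$ and $AB^{-1}$, and the identity, combined with the sign $(-1)^{n}$, is precisely the skein relation at $t=-1$. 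The unknot evaluation is also consistent since $tr(\mathrm{Id})=2=-(t^{2}+t^{-2})|_{t=-1}$ up to the same sign convention, and $\Phi$ is visibly multiplicative because disjoint stacked links correspond to products of trace functions. Surjectivity of $\Phi$ is then Procesi's theorem that $\mathbb{C}[X(S,SL(2,\mathbb{C}))]$ is generated as a $\mathbb{C}$-algebra by the trace functions $tr_{\gamma}$ for $\gamma\in\pi_{1}(S)$; injectivity follows from Horowitz's theorem, which states that every polynomial relation among traces of $SL(2,\mathbb{C})$-matrices is a consequence of the fundamental trace identity used above, so the spanning set of $K_{-1}$ obtained by reducing to multicurves maps to a linearly independent set of monomials.

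The main obstacle is not the bare algebra isomorphism but its compatibility with the Poisson structure $\{\,,\,\}$ of subsection \ref{sub:The-observable-algebra}, which is precisely what the Drinfeld--Turaev deformation in the next subsection needs. To verify this, I would set $t=-\exp(h/4)$ and compute the first-order commutator $[\Phi(\alpha),\Phi(\beta)]/h$ modulo $h$ for two transversely intersecting loops $\alpha,\beta$: applying the skein relation at each point of $\alpha\#\beta$, one gets a signed sum over intersections of the difference of the two smoothings, which must be matched term-by-term with Goldman's formula $\{f_{\alpha},f_{\beta}\}(\rho)=\sum_{x\in\alpha\#\beta}\mathrm{sign}(x)\,B(\delta_{tr}(\rho(\alpha_{x})),\delta_{tr}(\rho(\beta_{x})))$. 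The delicate point is reconciling the sign of the intersection with the orientation-preserving smoothing convention, and matching the normalization of the Killing form $B$ with the overall prefactor from the expansion of $t$; this is exactly where the specialization $t=-1$ (and not some other root of unity) is forced, since only there does the linearized skein relation reproduce the derivation rule of the Lie bracket on $sl(2,\mathbb{C})$ that underlies Goldman's pairing.
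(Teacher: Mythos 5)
The paper gives no proof of this theorem at all --- it simply cites Bullock \cite{Bullock1999} and Bullock--Przytycki \cite{BulPrzy:1999} and moves on. Your proposal reconstructs exactly the strategy behind those citations: the trace map $L\mapsto(-1)^{\#L}\prod tr_{\gamma_i}$, well-definedness via the fundamental $SL_2$ trace identity $tr(A)tr(B)=tr(AB)+tr(AB^{-1})$, surjectivity via Procesi (or Fricke--Vogt), and injectivity via the multicurve basis. So in spirit you are doing exactly what the references do, and the first two paragraphs are the right argument.

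Two caveats. First, your injectivity step is slightly too glib: Horowitz's theorem concerns trace relations in \emph{free} groups, whereas $\pi_1(S)$ for a closed surface is not free, and the surface relator contributes further relations on the representation variety. The correct packaging is Bullock's theorem that $\Phi\colon K_{-1}(M)/\sqrt{0}\to\mathbb{C}[X(M)]$ is always an isomorphism, together with the Przytycki--Sikora result (the substance of \cite{BulPrzy:1999}) that for thickened surfaces $K_{-1}(S\times I)$ has no nilpotents; the nilpotent-freeness is what Horowitz alone does not give you for a surface group, and it is the genuinely nontrivial input. Second, your third paragraph --- matching the $h$-linear term of the skein commutator against Goldman's bracket --- is correct in outline but addresses the \emph{next} theorem in the paper (the Drinfeld--Turaev deformation quantization statement) rather than this one; the present theorem only asserts the $t=-1$ algebra isomorphism, with the Poisson structure understood as the Goldman bracket already defined on the character-variety side. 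Including it doesn't hurt, and it is the right worry, but it is not needed to close this particular statement.
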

Then we have also the solution of the second problem: 
\begin{theorem}
The skein algebra $K_{t}(S\times[0,1])$ is the quantization of the
Poisson algebra $(X(S,SL(2,\mathbb{C}),\left\{ \,,\,\right\} )$ with
the deformation parameter $t=\exp(h/4)$.(see also \cite{BulPrzy:1999})\emph{ }
\end{theorem}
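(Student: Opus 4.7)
The plan is to verify the two conditions in the definition of deformation quantization using the skein-module description of $K_{-1}$ already established in the preceding theorem. First I would check that $K_t := K_t(S \times [0,1])$, viewed as a $\mathbb{C}[[h]]$-module via the substitution $t = \exp(h/4)$, is topologically free on the set of isotopy classes of framed multicurves in $S \times [0,1]$, and that the specialization $K_t/hK_t$ reproduces $K_{-1}(S \times [0,1])$ as a commutative algebra. Combined with the previous theorem this supplies the $\mathbb{C}$-algebra isomorphism $\Theta : K_t/hK_t \to (X(S, SL(2,\mathbb{C})), \{\,,\,\})$ required by condition (1).

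The core of the proof is condition (2), the compatibility of the deformed commutator with the Goldman--Turaev Poisson bracket. For two loops $\alpha, \beta$ in general position on $S$ I would realize them in $S \times [0,1]$ at distinct heights, so that $\alpha \cdot \beta$ places $\alpha$ below $\beta$ and $\beta \cdot \alpha$ places $\alpha$ above $\beta$. At each transverse intersection $q \in \alpha \# \beta$ this produces crossings of opposite sign in the two products. Applying the Kauffman skein relations $L_+ = t L_0 + t^{-1} L_\infty$ and $L_- = t^{-1} L_0 + t L_\infty$ at each crossing and expanding, all resolutions that differ at $\geq 2$ crossings cancel modulo $(t-t^{-1})^2$, leaving
\[
\alpha\beta - \beta\alpha \;=\; (t - t^{-1}) \sum_{q \in \alpha \# \beta} \epsilon(q;\alpha,\beta)\, (\alpha_q \beta_q) \pmod{h^2 K_t},
\]
where $\epsilon(q)$ is the sign of the intersection encoded by matching the Kauffman smoothings $L_0, L_\infty$ to the oriented resolution $\alpha_q \beta_q$. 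Since $t - t^{-1} = 2 \sinh(h/4) = h/2 + O(h^3)$, dividing by $h$ and applying $\Theta$ gives exactly the Goldman bracket (\ref{eq:Lie-bracket-loops}) on the generators of $X(S, SL(2,\mathbb{C}))$, where the overall factor of $1/2$ is absorbed into the normalization of the bilinear form $B$ on $sl(2,\mathbb{C})$ used to define $\{\,,\,\}$.

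Finally, since the skein algebra is generated as a $\mathbb{C}[[h]]$-algebra by (framed) loop classes, and both $[\,,\,]/h$ on $K_t/hK_t$ and the Goldman--Turaev bracket satisfy the Leibniz and Jacobi identities, agreement on a generating set propagates to all of the algebra. I expect the main obstacle to be the first step rather than the commutator calculation: one must carefully pass from the usual definition of the Kauffman skein module over $\mathbb{C}[t^{\pm 1}]$ to a $\mathbb{C}[[h]]$-module structure via completion at $t = -1$, and verify that no torsion or convergence issues obstruct the identification $K_t/hK_t \cong K_{-1}$ needed for $\Theta$. This flatness statement is standard in the literature cited in the excerpt, but it is the technical heart of what makes the Drinfeld--Turaev construction a bona fide deformation quantization rather than merely a one-parameter family of algebras.
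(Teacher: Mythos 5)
The paper does not supply a proof of this theorem; it only cites Turaev \cite{Turaev1989,Turaev1991} and Bullock--Przytycki \cite{BulPrzy:1999}, so there is no argument in the paper to compare against. Your sketch reconstructs the standard deformation-quantization argument from those sources, and its skeleton --- topological freeness of $K_t$ over $\mathbb{C}[[h]]$, identification of the classical quotient with the commutative skein algebra, and extraction of the Poisson bracket from the first-order term of the skein commutator --- is the right one.

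The step that would not survive a careful write-up is your displayed commutator formula. Telescoping the crossing changes and applying $L_+ - L_- = (t-t^{-1})(L_0 - L_\infty)$ at each crossing gives, modulo $h^2$, the coefficient $(t-t^{-1})$ times the \emph{difference of both Kauffman smoothings} at each intersection point, not a single signed copy of the oriented resolution $\alpha_q\beta_q$. This is not cosmetic. The $SL_2$ Goldman Poisson bracket for trace functions on $X(S,SL(2,\mathbb{C}))$ likewise has two terms per crossing, $\tfrac{1}{2}\sum_q\epsilon(q)\bigl(\mathrm{tr}\,\rho(\alpha_q\beta_q)-\mathrm{tr}\,\rho(\alpha_q\beta_q^{-1})\bigr)$, because the map $\gamma\mapsto\mathrm{tr}\,\rho(\gamma)$ from Goldman's loop algebra $Z$ (where the bracket in the paper's equation (\ref{eq:Lie-bracket-loops}) does have only the oriented term) to functions on the character variety is \emph{not} a Lie algebra map; the discrepancy is governed by the Cayley--Hamilton identity $\mathrm{tr}(AB)+\mathrm{tr}(AB^{-1})=\mathrm{tr}(A)\,\mathrm{tr}(B)$, which is exactly the $t=-1$ skein relation. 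Matching the two-term Kauffman commutator to the two-term trace bracket, signs included, is the actual content of the cited theorems, and your sentence about $\epsilon(q)$ being ``encoded by matching the Kauffman smoothings to the oriented resolution'' collapses the two smoothings into one, which is where a literal reading of your argument breaks. Separately, $t=\exp(h/4)$ sends $h=0$ to $t=1$, so $K_t/hK_t$ would be $K_1$, not the $K_{-1}$ that your step (1) and the preceding theorem require; one must complete at $t=-1$, e.g.\ via $t=-\exp(h/4)$. That sign slip is already present in the paper's statement, but your proof inherits it without comment.
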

To understand these solutions we have to introduce the skein module
$K_{t}(M)$ of a 3-manifold $M$ (see \cite{PrasSoss:97}). For that
purpose we consider the set of links $\mathcal{L}(M)$ in $M$ up
to isotopy and construct the vector space $\mathbb{C}\mathcal{L}(M)$
with basis $\mathcal{L}(M)$. Then one can define $\mathbb{C}\mathcal{L}[[t]]$
as ring of formal polynomials having coefficients in $\mathbb{C}\mathcal{L}(M)$.
Now we consider the link diagram of a link, i.e. the projection of
the link to the $\mathbb{R}^{2}$ having the crossings in mind. Choosing
a disk in $\mathbb{R}^{2}$ so that one crossing is inside this disk.
If the three links differ by the three crossings $L_{oo},L_{o},L_{\infty}$
(see figure \ref{fig:skein-crossings}) inside of the disk then these
links are skein related. %
\begin{figure}
\begin{center}\includegraphics[scale=0.25]{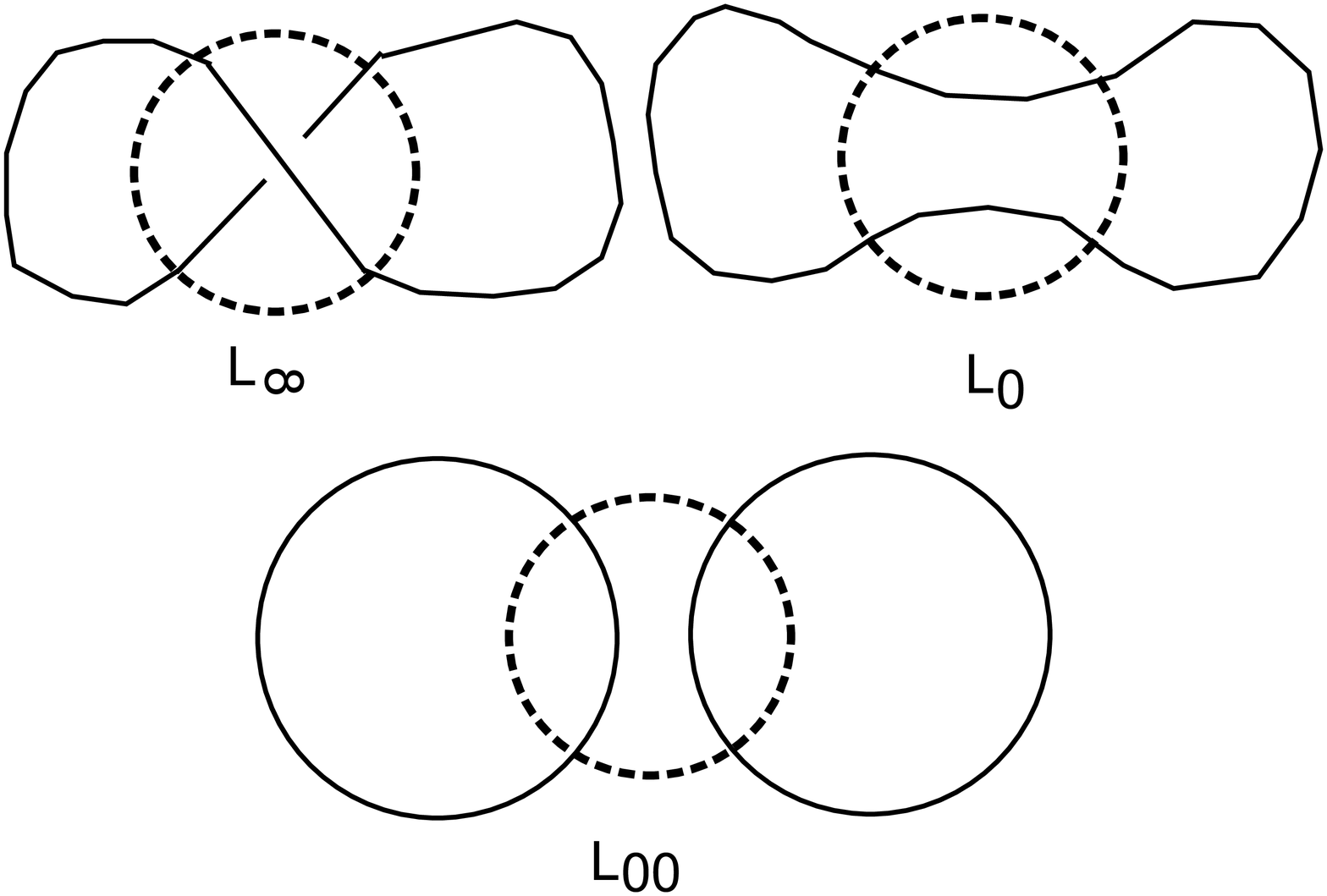}\end{center}

\caption{crossings $L_{\infty},L_{o},L_{oo}$\label{fig:skein-crossings}}

\end{figure}
Then in $\mathbb{C}\mathcal{L}[[t]]$ one writes the skein relation%
\footnote{The relation depends on the group $SL(2,\mathbb{C})$.%
} $L_{\infty}-tL_{o}-t^{-1}L_{oo}$. Furthermore let $L\sqcup O$ be
the disjoint union of the link with a circle then one writes the framing
relation $L\sqcup O+(t^{2}+t^{-2})L$. Let $S(M)$ be the smallest
submodul of $\mathbb{C}\mathcal{L}[[t]]$ containing both relations,
then we define the Kauffman bracket skein modul by $K_{t}(M)=\mathbb{C}\mathcal{L}[[t]]/S(M)$.
We list the following general results about this modul:
\begin{itemize}
\item The modul $K_{-1}(M)$ for $t=-1$ is a commutative algebra.
\item Let $S$ be a surface then $K_{t}(S\times[0,1])$ caries the structure
of an algebra.
\end{itemize}
The algebra structure of $K_{t}(S\times[0,1])$ can be simple seen
by using the diffeomorphism between the sum $S\times[0,1]\cup_{S}S\times[0,1]$
along $S$ and $S\times[0,1]$. Then the product $ab$ of two elements
$a,b\in K_{t}(S\times[0,1])$ is a link in $S\times[0,1]\cup_{S}S\times[0,1]$
corresponding to a link in $S\times[0,1]$ via the diffeomorphism.
The algebra $K_{t}(S\times[0,1])$ is in general non-commutative for
$t\not=-1$. For the following we will omit the interval $[0,1]$
and denote the skein algebra by $K_{t}(S)$. Furthermore we remark,
that \emph{all results remain true if we use an intersection in $L_{\infty}$
instead of a crossing.}

\emph{Ad hoc} the skein algebra is not directly related to the foliation.
We used only the fact that there is an idempotent in the $C^{*}$
algebra represented by a closed curve. It is more satisfying to obtain
a direct relation between both construction. Then the von Neumann
algebra of the foliation is the result of a quantization in the physical
sense. This construction is left for the next subsection.

\subsection{Temperley-Lieb algebra and the operator algebra of the foliation\label{sub:Temperley-Lieb-algebra-foliation}}

In this subsection we will describe a direct relation between the
skein algebra and the factor $I\! I\! I_{1}$ constructed above. At
first we will summarize some of the results above. 
\begin{enumerate}
\item The foliation of the 3-sphere $S^{3}$ has non-trivial Godbillon-Vey
class. The corresponding von Neumann algebra must contain a factor
$I\! I\! I$ algebra.
\item We obtained that the von Neumann algebra is the hyperfinite factor
$I\! I\! I_{1}$ determined by a factor $I\! I_{\infty}$ algebra
via Tomita-Takesaki theory.
\item In the von Neumann algebra there are idempotent operators given by
closed curves in the foliation.
\item The set of closed curves carries the structure of the Poisson algebra
whose quantization is the skein algebra determined by knots and links.
Thus the skein algebra can be seen as a quantization of the fundamental
group.
\end{enumerate}
Thus our main goal in this subsection should be a direct relation
between a suitable skein algebra and the von Neumann algebra of the
foliation. As a first step we remark that a factor $I\! I_{\infty}$
algebra is the tensor product $I\! I_{\infty}=I\! I_{1}\otimes I_{\infty}$.
Thus the main factor is given by the $I\! I_{1}$ factor, i.e. a von
Neumann algebra with finite trace. From the point of view of invariants,
both factors $I\! I_{\infty}$ and $I\! I_{1}$ are Morita-equivalent
leading to the same K-theoretic invariants. 

Now we are faced with the question: Is there any skein algebra isomorphic
to the factor $I\! I_{1}$ algebra? Usually the skein algebra is finite
or finitely generated (as module over the first homology group). Thus
we have to construct a finite algebra reconstructing the factor $I\! I_{1}$
in the limit. Following the theory of Jones \cite{Jon:83}, one uses
a tower of Temperley-Lieb algebras as generated by projection (or
idempotent) operators. Thus, if we are able to show that a skein algebra
constructed from the foliation is isomorphic to the Temperley-Lieb
algebra then we have constructed the factor $I\! I_{1}$ algebra. 

For the construction we go back to factor $I\! I_{\infty}$ foliation
discussed above and identified as the horocycle foliation. Let $P$
the polygon with hyperbolic metric used in subsection \ref{sub:factor-III-case}
and in subsection \ref{sub:Non-cobordant-foliations-S3}. Given a
polygon $P$ as covering space of a surface $S$ (of genus $g>1$)
with non-positive curvature. Denote by $\gamma_{v}$ the geodesic
with initial tangent vector $v$ and by $dist(\gamma_{v}(t),\gamma_{w}(t))$
the distance between two points on two curves. We call the two tangent
vectors $v,w$ of the cover $P$ asymptotic if the distance $dist(\gamma_{v}(t),\gamma_{w}(t))$
is bounded as $t\to\infty$. For a unit tangent vector $v\in T_{1}P$
define the Busemann function $b_{v}:P\to\mathbb{R}$ by\[
b_{v}(q)=\lim_{t\to\infty}\left(dist(\gamma_{v}(t),q)-t\right)\]
This function is differentiable and the gradient $-\nabla_{q}b_{v}$
is the unique vector at $q$ asymptotic to $v$. We define alternatively
the horocycle $h(v)$ (determined by $v$) as the level set $b_{v}^{-1}(0)$.
Clearly $h(v)$ is the limit as $R\to\infty$ of the geodesic circles
of radius $R$ centered at $\gamma_{v}(R)$. Let $W(v)$ be the set
of vectors $w$ asymptotic to $v$ with footpoints on $h(v)$ (see
Fig. \ref{fig:horocycle-fig}), i.e.\[
W(v)=\left\{ -\nabla_{q}b_{v}\,|\, q\in h(v)\right\} \:.\]
The curves $W(v),\: v\in T_{1}P$ are the leaves of the horocycle
foliation $W$ of $T_{1}P$ which can be lifted to a horocycle foliation
$W$ on $T_{1}S$. Remember a horocycle is a circle in the interior
of $P$ touching the boundary at one point. Now we consider the flow
in $T_{1}P$ along a horocycle with unit speed which induces a codimension-1
foliation in $T_{1}P$. The horocycle foliation is parametrized by
the set of horocycles on $P$. Thus the set of unit tangent vectors
labels the leaves of the foliation or the leaf space is parametrized
by unit tangent vectors. Furthermore we remark that every horocycle
is also determined by a unit tangent vector. By definition, the set
of unit tangent vectors is completely determined by curves in $P$.
Every horocycle meets the boundary of $P$ at one point, which we
mark (see Fig. \ref{fig:horocycle}), say $m_{1},\ldots,m_{n}$. %
\begin{figure}
\begin{center}

\includegraphics[scale=0.25]{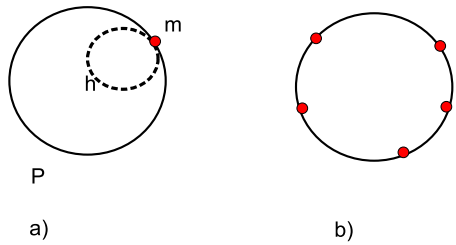}

\end{center}

\caption{fig. a) an example for a horocycle $h$ on $P$ and fig. b) marked
points on the boundary of the polygon $P$\label{fig:horocycle}}

\end{figure}
\begin{figure}
\begin{center}

\includegraphics[scale=0.25]{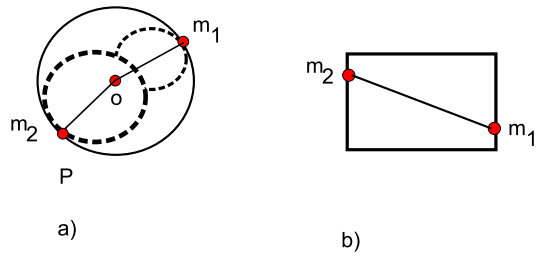}

\end{center}

\caption{a) flow line between two marked points defined via two horocycles,
b) simple picture as substitute\label{fig:flow-line}}

\end{figure}
Then by uniqueness of the flow, there is a curve from the boundary
point $m_{1}$ in the interior of $P$ meeting a point $o$ followed
by a curve from this point $o$ to another boundary point $m_{2}$
(see figure \ref{fig:flow-line}). Thus in general we obtain curves
in $P$ going from one marked boundary point $m_{k}$ to another marked
boundary point $m_{l}$. We need only a countable number of these
points. Therefore we choose the number of vertices $k$ of the polygon
$P$. Without loss of generality we choose an even number of vertices
$k=2n$. Then any pair of vertices is connected by one geodesic path.
To express the grouping of the marked points, we used a rectangle
instead of the circle (as indicated in Fig. \ref{fig:flow-line}).
All other paths can be generated by a simple variation of the start
and end point (isotopy). Using the horocycles we obtain a flow for
every pair of marked points. The set of unit tangent vectors labels
the leaves of the foliation and can be described by curves between
the marked points. Then we group the marked points and assume that
we have the same number of marked points on the left and on the right
side of $P$. Now we have to define the (formal) sum of two flows.
A flow starts on one marked point of one side going to one point at
the other side (see figure \ref{fig:flow-line}). By using this definition
we obtain also singularities, i.e. crossings of flows. But the singularities
or intersection points can be solved to get non-singular flows.%
\begin{figure}
\begin{center}

\includegraphics[scale=0.25]{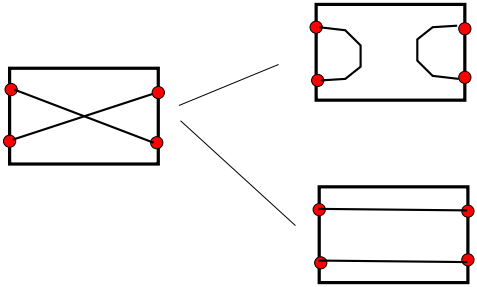}

\end{center}

\caption{resolution of the flow singularities\label{fig:resolution-sing}}

\end{figure}
 The figure \ref{fig:resolution-sing} shows the method%
\footnote{The method was used in the theory of finite knot invariants (Vassiliev
invariants) and is known as STU relation.%
}. In the subsection \ref{sub:Drinfeld-Turaev-Quantization} we introduced
the skein algebra. We define the resolution of the singularity together
with a parameter $t$ in similarity to the Skein relation $L_{\infty}=tL_{o}+t^{-1}L_{oo}$.
By this method we are also able to define a sum of two flows by reversing
the procedure: the sum of two flows will produce a singular flow.
Now we consider two polygons with the same number of marked points
on one side. These polygons can be put together (see figure \ref{fig:product-structure})
to define a product. %
\begin{figure}
\begin{center}

\includegraphics[scale=0.25]{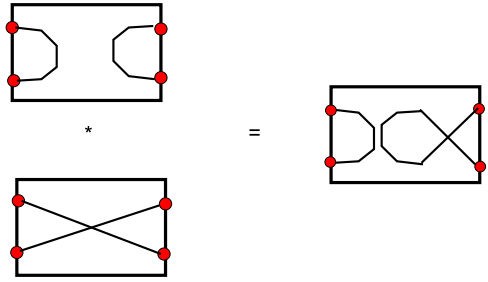}

\end{center}

\caption{product structure \label{fig:product-structure}}

\end{figure}
 Before presenting the theorem, we summarize the definition of the
operations:
\begin{enumerate}
\item Given a polygon $P(k,k)$ with $2k$ marked points, i.e. with $k$
marked points of each side. There is only one internal line between
two marked points, i.e. $2k$ marked points are connected by $k$
lines. Two internal lines do not intersect. The marked points of each
side have equal distances to each other. One can add or remove an
internal circle.
\item \emph{Product:} The connected sum of polygons $P(k,k)$ with $k$
marked points on each side is the product in the algebra (see Fig.
\ref{fig:product-structure}).
\item \emph{Multiplication by number}: The change in the distance between
two marked points is the multiplication with a real number. The details
of this operation is not important at the moment.
\item \emph{Sum}: A linear combination between two polygons is represented
by the crossing of two internal lines (see Fig. \ref{fig:resolution-sing}).
\item $*$\emph{operation}: A $180\text{\textdegree}$ rotation of the polygon
$P^{*}(k,k)$ is the $*$operation.
\end{enumerate}
Putting all these definitions together we obtain:
\begin{theorem}
\label{thm:leaf-space-Temperley-lieb-algebra}The leaf space of the
horocycle foliation of a surface $S$ (of genus $g>1$) is represented
by the leaf space of a horocycle foliation of a polygon $P$ in $\mathbb{H}^{2}$.
If one extends the leaf space to allow (countable many) crossings
between two leaves and considers a connected sum of the polygons then
this extended leaf space admits the structure of an $*-$algebra (see
the rules above). Let $P(k,k)$ be a polygon with $2k$ vertices generating
a foliation (see subsection \ref{sub:Non-cobordant-foliations-S3})
with $*-$algebra $P_{k}$ as extended leaf space. This algebra is
isomorphic to the Temperley-Lieb-algebra $TL_{k}$ generated by $k$
elements $\left\{ e_{1},\ldots,e_{n}\right\} $ subject to the relations
(\ref{Jones-algebra}). The generators are idempotent operators represented
by closed curves in $S$. In the direct limit one obtains the hyperfinite
factor $I\! I_{1}$ algebra given as tower of Temperley-Lieb algebras
(and equal to the skein algebra of a marked disk).\end{theorem}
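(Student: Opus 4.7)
The plan is to implement the correspondence in three stages: first identify a natural basis for $P_{k}$, then verify the defining Temperley-Lieb relations on the generators, and finally pass to the inductive limit to obtain the hyperfinite $I\! I_{1}$ factor.

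First, I would reduce the claim to combinatorics on the polygon. By the preceding paragraph the leaf space of the horocycle foliation is parametrized by unit tangent vectors, and each horocycle meets $\partial P$ at one point. Choosing the $2k$ vertices as marked points and separating them into $k$ left and $k$ right, a leaf configuration in $P(k,k)$ is described by a collection of arcs pairing these $2k$ points; after resolving any transverse crossing via the rule of Fig.~\ref{fig:resolution-sing} a diagram is equivalent to a non-crossing pair partition. The number of such partitions is the Catalan number $C_{k}=\dim TL_{k}$, so the proposed isomorphism is at least dimension-consistent.

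Next, I would exhibit the generators and check the defining relations. Let $e_{i}\in P_{k}$ be the diagram whose arcs are all horizontal except for a cap pairing the marked points at positions $i,i+1$ on the left and a cup pairing positions $i,i+1$ on the right. After the side identifications of $P$ these arcs close up to embedded loops in $S$, giving the geometric interpretation as idempotents (closed curves transversal to the foliation). To verify the Jones relations I would use the product of Fig.~\ref{fig:product-structure} together with the loop-removal convention which replaces an interior closed curve by a scalar $\delta$. Stacking $e_{i}$ on itself creates exactly one interior loop, giving $e_{i}^{2}=\delta\,e_{i}$; stacking $e_{i}e_{i\pm1}e_{i}$ resolves after isotopy of the three internal arcs to $e_{i}$ with no loops, yielding $e_{i}e_{i\pm1}e_{i}=e_{i}$; and for $|i-j|\geq 2$ the supporting arcs of $e_{i},e_{j}$ are disjoint, so stacking is symmetric and $e_{i}e_{j}=e_{j}e_{i}$. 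The $*$-operation is the $180^{\circ}$ rotation, which swaps caps with cups and coincides with the standard anti-involution on $TL_{k}$. This gives $P_{k}\cong TL_{k}$ as $*$-algebras. Finally, the inclusion $P_{k}\hookrightarrow P_{k+1}$ obtained by appending a single extra horizontal strand realizes the standard embedding $TL_{k}\subset TL_{k+1}$. The direct limit $P_{\infty}=\lim_{\rightarrow}P_{k}$ is then the union of the Jones tower; equipped with the Markov trace (positive and faithful for $\delta=2\cos(\pi/n)$, $n\geq 3$, or for $\delta\geq 2$) and completed via the GNS construction, it produces the hyperfinite factor $I\! I_{1}$ by Jones' classical result \cite{Jon:83}. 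This is precisely the $I\! I_{1}$ factor appearing in the decomposition $I\! I_{\infty}=I\! I_{1}\otimes I_{\infty}$ of the horocycle foliation algebra of subsection~\ref{sub:From-exotic-smoothness}. The identification with the Kauffman bracket skein algebra of a disk with $2k$ marked boundary points is then immediate, since non-crossing tangles in the disk modulo the Kauffman skein and framing relations give exactly $TL_{k}$ with loop parameter $\delta=-(t^{2}+t^{-2})$.

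The main obstacle is not the algebraic verification of the Temperley-Lieb relations, which is essentially combinatorial, but rather making rigorous the claim that the \emph{extended} leaf space carries a well-defined $*$-algebra structure. Sums of leaves are a priori only formal, and one must justify that the skein-style resolution of crossings is consistent with the horocycle dynamics on $P$, namely that the equivalence class in the extended leaf space of a transversally intersecting pair of flow lines depends only on how the strands meet at the intersection and not on the ancillary paths used to connect them to the marked points. This requires isotopy invariance of the resolution together with compatibility with the side identifications of $P$, and it is here that the hyperbolic geometry of $P$ and the non-triviality of the Godbillon--Vey class control the argument; the finite-dimensional approximation at each stage $k$, and hence the entire tower, is rigid only because the underlying polygon foliation is.
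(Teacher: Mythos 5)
Your proof follows essentially the same diagrammatic route as the paper's: marked boundary points from horocycle intersections with $\partial P$, cap–cup generators $e_{i}$, graphical verification of the Temperley--Lieb relations, the standard inclusion $TL_{k}\subset TL_{k+1}$ by appending a strand, and Jones' direct-limit construction of the hyperfinite $I\! I_{1}$ factor. One refinement you make that is worth flagging: your relations $e_{i}^{2}=\delta e_{i}$ together with $e_{i}e_{i\pm1}e_{i}=e_{i}$ are the internally consistent Kauffman-diagram form of $TL_{k}$, whereas the paper's relations (\ref{Jones-algebra}) as printed impose $e_{i}^{2}=e_{i}$ \emph{and} $e_{i}e_{i+1}e_{i}=e_{i}$ simultaneously, which forces the loop parameter to be $1$, a degenerate value for which Jones' tower does not yield a nontrivial subfactor; the paper itself silently corrects this in Section~\ref{sec:Knots-as-states} by replacing $e_{i}^{2}=e_{i}$ with $e_{i}^{2}=\tau e_{i}$. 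You also make explicit two points the paper elides — that passing to the $I\! I_{1}$ factor requires a choice of Markov trace and GNS completion, not merely an algebraic direct limit, and that the well-definedness of the $*$-algebra structure on the extended leaf space (consistency of the skein-style crossing resolution with the horocycle dynamics and the side identifications of $P$) is asserted rather than proved; that last point is the genuine gap in the paper's own argument too.
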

\begin{proof}
The relation between the two foliations was already shown above. The
extended leaf space was also constructed above. Especially we mention
the operations. The polygon $P(k,k)$, where each internal line connects
one marked point on one side with a marked point on the other side,
is the identity in the algebra. The next complicated case is given
by a polygon where two marked points on each side are connected by
one internal line. We denote these polygons by $e_{n}$ and refer
to figure \ref{fig:generators-TL-alg} for the convention. The product
operation is defined above. %
\begin{figure}
\begin{center}

\includegraphics[scale=0.25]{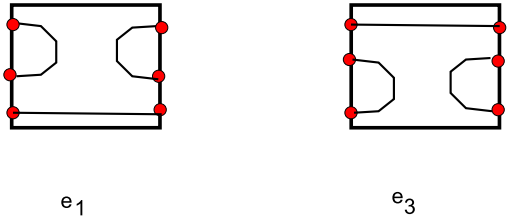}

\end{center}

\caption{example for generators $e_{1}$ and $e_{3}$ of the Temperley-Lieb
algebra \label{fig:generators-TL-alg}}

\end{figure}
 By simple graphical manipulations (see \cite{PrasSoss:97} $\S$26.9.)
using our rules above, we obtain the relations:\begin{eqnarray}
e_{i}^{2}=e_{i}\,, & e_{i}e_{j}=e_{j}e_{i}\,:\,|i-j|>1,\nonumber \\
e_{i}e_{i+1}e_{i}=e_{i}\,, & e_{i+1}e_{i}e_{i+1}=e_{i+1}\,,\, e_{i}^{*}=e_{i}\label{Jones-algebra}\end{eqnarray}
where the $*$operation obviously do not change the generators $e_{i}$.
This algebra is the Temperley-Lieb algebra $TL_{n}$ generated by
$\left\{ e_{1},\ldots,e_{n}\right\} $. The inclusion $TL_{n}\subset TL_{n+1}$
is given by adding two marked points, one point on each side connected
by one internal line. As Jones \cite{Jon:83} showed: the limit case
$\lim_{n\to\infty}TL_{n}$ (considered as direct limit) is the factor
$I\! I_{1}$. Thus we have constructed the factor $I\! I_{1}$ algebra
as skein algebra. $\square$
\end{proof}
We will finish this subsection with one remark. In the factor Temperley-Lieb
algebra there is an unique idempotent operator, the Jones-Wenzl idempotent,
which is related to Connes idempotent operator in the operator algebra
of the foliation by our construction.

\section{Knots as states\label{sec:Knots-as-states} }

Lets summarize the situation again:
\begin{enumerate}
\item A small exotic $\mathbb{R}^{4}$ is related to a codimension-1 foliation
of the 3-sphere $S^{3}$ unique up to foliated cobordism.
\item The leaf space of this foliation is a factor $I\! I\! I_{1}$ algebra
related to a foliation with leaf space a factor $I\! I_{\infty}=I\! I_{1}\otimes I_{\infty}$
algebra.
\item The algebra $I\! I_{1}$ is the direct limit of the Temperley-Lieb
algebras $TL_{n}$.
\item Closed curves transversal to the foliation are related to idempotent
operators of the leaf space.
\end{enumerate}
This section is two-fold. On one side we will try to construct the
states in the factor $I\! I\! I_{1}$ algebra by geometric methods.
But on the other side we will also discuss the operators interpreted
as cobordisms between the states (in the sense of the topological
QFT a la Atiyah). This section is different from the others, we present
only the ideas and leave the proofs for further papers.

\subsection{Observables as closed curves}

In subsection \ref{sub:The-observable-algebra} we described the observable
algebra (i.e. the Poisson algebra) as the space of flat connections
$X(S,SL(2,\mathbb{C}))$ where the observables are the holonomies
(\ref{eq:holonomy-definition}) of the flat connection along closed
curves. This picture remains true after quantization where we obtain
the skein algebra. In subsection \ref{sub:Idempotent-operators-knot-cobordism}
we considered the idempotent operators represented by closed curves.
Finally we showed in Theorem \ref{thm:leaf-space-Temperley-lieb-algebra}
that these closed curves generate the Temperley-Lieb algebra. Especially
the linear combination of two generators is represented by a singular
flow (see Fig. \ref{fig:resolution-sing}). An observable is represented
by a hermitean operator or equivalently by a linear combination of
projectors. Therefore an observable is given by a singular flow or
by using Theorem \ref{thm:idempotent-operators-as-closed-curves}
as a singular knot (or link). 

Now we will consider a special idempotent operator known as Jones-Wenzl
idempotent. For its construction we have to modify the definition
(\ref{Jones-algebra}) of the algebra: instead of $e_{i}^{2}=e_{i}$
we write \[
e_{i}^{2}=\tau e_{i}\]
where $\tau$ is a real number given by a closed circle in the polygon
$P(k,k)$. So, we modify one rule in the definition of our algebra:
adding an internal circle is equivalent to a multiplication with $\tau$.
If $\tau$ is the number $\tau=a_{0}^{2}+a_{0}^{-2}$ with $a_{0}$
a $4n$th root of unity ($a_{0}^{4k}\not=1$ for $k=1,\ldots,n-1$)
and $A_{n}\subset TL_{n}$ a subalgebra generated of $\left\{ e_{1},\ldots,e_{n}\right\} $
missing the identity $1_{n}$ then there is an element $f^{(n)}$
with \begin{eqnarray*}
f^{(n)}A_{n} & = & A_{n}f^{(n)}=0\\
1_{n}-f^{(n)} & \in & A_{n}\\
f^{(n)}f^{(n)} & = & f^{(n)}\end{eqnarray*}
called the Jones-Wenzl idempotent \cite{PrasSoss:97}. This idempotent
is used to define a 3-manifold invariant of Witten type \cite{Wit:89}.
Witten defines this invariant by the state sum of the Chern-Simons
theory (for a suitable gauge group, here it is $SU(2)$). This unexpected
relation gives a hint for a possible action related to our QFT (given
by the factor $I\! I\! I_{1}$).

\subsection{Knot concordance and capped gropes}

In the previous subsection we considered the observables of the theory
given by singular knots or links. Now we are interested in the construction
of states in the factor $I\! I_{1}-$algebra $A$, i.e. linear functionals
$f:A\to\mathbb{C}$ which are positive ($f(x^{*}x)\geq0$ for all
$x\in A$) and normed $||f||=1$. Usually one constructs the states
by a representation (GNS) of the algebra $A$ into a Hilbert space.
From the physical point of view, one can argue that every pure state
must be corresponding to a classical state. Because of the relation
between the Poisson algebra $X(S,SL(2,\mathbb{C}))$ and its quantization
as skein space $K_{t}(S\times[0,1])$, the state must be the holonomy
along a knot. To see this fact, we will follow another path. Our theory
is purely topological, i.e. we assign to every observable (as endomorphism
of some Hilbert space) a singular knot. Following the axioms of a
Topological QFT (TQFT) by Atiyah, one assigns to a cylinder like $S\times[0,1]$
an endomorphism. In our case it is an element of the skein space,
i.e. a singular knot. Singular knots appear in the theory of Vassiliev
invariants where one considers transitions between two different knots.
For our approach we have to interpret the 3-dimensional objects in
the 4-dimensional context.

For that purpose we consider the horocycle foliation of the unit tangent
bundle $T_{1}S$ of a surface $S$ with genus $g>1$. As usual we
associate to this foliation a codimension-1 foliation of the 3-sphere
given by a polygon $P$ in $\mathbb{H}^{2}$ with $4g$ vertices (or
sides). In the proof of the Theorem \ref{thm:codim-1-foli-radial-fam},
the cobordism class of this foliation (i.e. the Godbillon-Vey invariant)
is determined by a capped grope (with its design). This capped grope
is determined by a path in a binary tree (a trivalent tree), see subsection
\ref{sub:design-of-CH}. The capped grope is embedded in some 4-space,
i.e. the boundary of the capped grope, a circle, is embedded in the
3-space ($S^{3}$ or $\mathbb{R}^{3}$). The caps in the capped grope
are also immersed disks (see subsection \ref{sub:Capped-gropes}).
Therefore, by fixing one cap, we have a cobordism between two embedded
circles or a cobordism between two knots. Especially, this cobordism
is induced from a cobordism between the embedding spaces, usually
called a knot concordance. Therefore we have identified the endomorphisms
as knot concordance between two knots, the states. There is an extensive
literature for the relation between knot concordances and capped gropes
\cite{TeichnerOrrCochran2003,TeichnerConant2004}. Especially we mention
the relation to Vassiliev invariants \cite{TeicherConant2004} and
the Kontsevich integral \cite{TeichnerScheideman2004}. Here we presented
only this relation and refer to our future work.

\section{Discussion}

In this paper we presented a variety of relations between codimension-1
foliations of the 3-sphere $S^{3}$ and noncommutative algebras. By
using the results of our previous paper \cite{AsselmeyerKrol2009},
we obtain a relation between (small) exotic smoothness of the $\mathbb{R}^{4}$
and noncommutativity via the noncommutative leaf space of the foliation
and the Casson handle. Thus we get our main result of this paper:
\\
\emph{The Casson handle carries the structure of a noncommutative
space determined by a factor $I\! I_{1}$ algebra which is related
to the skein algebra of the disk with marked points and to the leaf
space of the horocycle foliation.} \\
Thus we have obtained a direct link between noncommutative spaces
and exotic 4-manifolds which can be used to get a direct relation
to quantum field theory. One of the central elements in the algebraic
quantum field theory is the Tomita-Takesaki theory leading to the
$I\! I\! I_{1}$ factor as vacuum sector \cite{Borchers2000}. As
a possible candidate one has loop quantum gravity with an unique diffeomorphism-invariant
vacuum state \cite{Thiemann2006}. Especially the relation to skein
spaces and knot concordances are very attractive for future work.
Our work has also some overlap with the nice work \cite{BertozziniConti2010}
for quantum gravity. We will close our paper with some speculations
for a possible interpretation of the capped gropes as the trees in
Connes-Kreimer renormalization theory. Starting point is the observation,
that the Hopf algebra of formal vector fields in a codimension-1 foliation
is isomorphic to the Hopf algebra of renormalization in QFT a la Kreimer.
In our context it means that exotic smoothness (as described by codimension-1
foliations) has many to do with renormalization in QFT. Again we refer
to our future work.

\appendix

\section*{Appendix}

\section{Proof of Theorem \ref{thm:codim-1-foli-radial-fam}}
\begin{proof}
We consider a tubular neighborhood $\partial A\times[0,1]\subset\mathbb{R}_{1}^{4}$
of $\partial A$ and glue the Casson handle along some 2-handle. Now
we will weaken the Casson handles by using capped gropes (see chapter
1-4 in \cite{FreQui:90}) denoted by GCH. These differ from Casson
handles in that many surface stages are interspersed between the immersed
disks of Casson's construction. The GCH are also indexed by rooted
finitely branching objects. The growth rate of their stages was determined
in \cite{AncelStarbird1989} (Theorem A) to be at least exponential
(more than $2^{n}$). In the proof of Theorem 3.2 in \cite{DeMichFreedman1992}
the gaps in the design where used. The gaps in case of the Casson
handle are not manifolds and look like $S^{1}\times D^{3}/Wh$. In
case of the capped grope one has {}``good'' gaps of the form $S^{1}\times D^{3}$.
That is the reason why we switch to these objects now. Now we decompose
the gap by $gap=S^{1}\times D^{3}=S^{1}\times D^{2}\times I$ with
the unit interval $I=[0,1]=D^{1}$. The boundary is a decomposition
$\partial(gap)=(S^{1}\times S^{1}\times I)\cup(S^{1}\times D^{2}\times\left\{ 0,1\right\} )$
of the caps (north and south) and the equator region (see Fig. \ref{fig:decomposition-of-gap}).%
\begin{figure}
\includegraphics[scale=0.5]{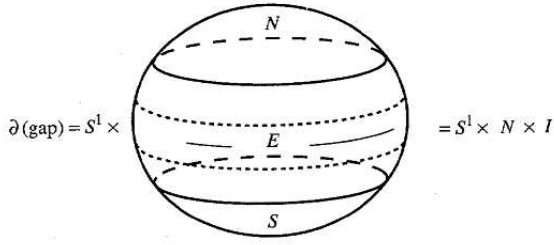}

\caption{decomposition of the boundary of a gap (fig. repainted from \cite{DeMichFreedman1992})\label{fig:decomposition-of-gap}}

\end{figure}
 The radius coordinate $\rho$ defined above is identified with the
unit interval of the $gap$ (see the proof of Theorem 3.2). In the
notation of \cite{DeMichFreedman1992}, we think of each gap as $gap=S^{1}\times N\times I$
where $N=D^{2}$ is the neighborhood around the north pole of the
2-sphere in Fig. \ref{fig:decomposition-of-gap}. Using the reembedding
theorems every GCH embeds in the open 2-handle and induces a foliation
visualized in Fig. %
\begin{figure}
\includegraphics[scale=0.5]{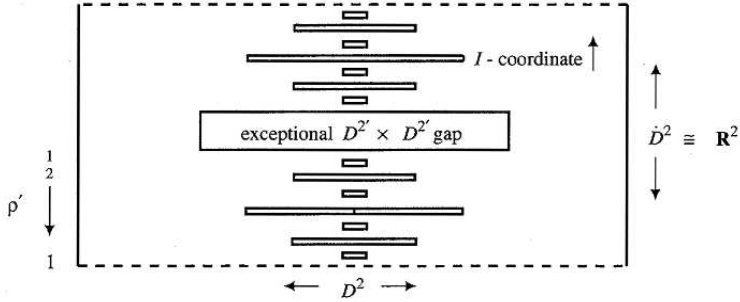}

\caption{visualization of foliation as coordinatization of the design (fig.
repainted from \cite{DeMichFreedman1992})\label{fig:visualization-of-foliation-design}}

\end{figure}
. As described in subsection \ref{sub:design-of-CH} the simplest
tree $Tree_{+}$ belongs to the binary sequence $.000\ldots$ and
is represented by $t=0$ and the radius $r=1/(1-t)=+\infty$. The
foliation of the design is perpendicular to $S^{1}\times N$, i.e.
$S^{1}\times\{longitudes\}$ are the leaves. The intersection of the
leaves with $S^{1}\times N$ produces a foliation of the disk $N$.
This disk is given up to conformal automorphism by fixing the sphere
$S^{2}\supset N$, i.e. the disk is invariant w.r.t. the group $PSL(2,\mathbb{R})$.
The boundary of $N$ is given by geodesic curves. The $PSL(2,\mathbb{R})-$invariance
induces a mapping of the disk $N$ into the hyperbolic space $\mathbb{H}^{2}$,
where every $PSL(2,\mathbb{R})$ transformation is an isometry now.
Then the foliated $N$ is mapped to a foliated polygon $P$ in $\mathbb{H}^{2}$,
where the foliation is $PSL(2,\mathbb{R})-$invariant. From this point
of view we interpret $S^{1}\times N$ as the unit tangent bundle of
the polygon $T_{1}P$. Then the volume of the polygon $P$ is the
volume of the disk $N$, i.e. $vol(P)=vol(N)$ and we choose the number
of vertices of $P$ in a suitable manner by defining the geodesic
arcs forming the boundary of $N$. As Fig. \ref{fig:decomposition-of-gap}
indicated, the disk $N$ is also part of the boundary $\partial(gap)=S^{1}\times S^{2}$
of the gap. Then the unit interval in the gap is directly related
to the radius $r$ of the 2-sphere $S^{2}\supset N$ and this radius
determines the volume of the disk $N$ (as part of the upper hemisphere
of $S^{2}$, see Fig. \ref{fig:decomposition-of-gap}). But then by
using $PSL(2,\mathbb{R})-$invariance, we obtain the relation $vol(P)=r^{2}$.
\end{proof}
The tubular neighborhood $\partial A\times[0,1]=\partial A\times I$
can be chosen in such a manner that the coordinate $\rho$ agrees
with the unit interval $I$ of the neighborhood. In subsection \ref{sub:design-of-CH}
we described the foliation of the design as a foliated cobordism between
two disks $N,S$ given by the 2-sphere $S^{2}$ (see Fig. \ref{fig:foliation-of-S2}).
As described above, every disk ($N$ or $S$) is related to the polygon
$P$ (without loss of generality we use $N$ and $S$ with the same
volume $vol(N)=vol(S)$). Then the foliation of the design induces
a foliation of the cobordism $\partial A\times I$. The space $D^{2}\times S^{2}$
(the 5-stage towers) is the leaf of the foliated cobordism transverse
to the foliation on the boundary $\partial A$. Then the restriction
on the boundary $\partial A\times\left\{ 0,1\right\} $ induces a
foliation on $\partial A$ determined by the volume $vol(P)$ of the
polygon. So, the radius $r^{2}$ (proportional to the volume of $vol(P)$)
is a cobordism invariant of the foliation. By Theorem \ref{thm:foliation-3MF}
we obtain a codimension-1 foliation of $\partial A$ induced from
a foliation of the $S^{3}$. As shown \cite{Thu:72} (see also the
book \cite{Tamura1992} chapter VIII for the details) this invariant
agrees with the Godbillon-Vey invariant $GV=r^{2}$. Then two non-diffeomorphic
small exotic $\mathbb{R}^{4}$ (for $s\not=t$) have different radial
coordinates ($\frac{1}{1-s}=r_{s}\not=r_{t}=\frac{1}{1-t}$) and therefore
different Godbillon-Vey invariants $r_{s}^{2}\not=r_{t}^{2}$. The
corresponding foliations are non-cobordant to each other. $\square$

\section*{Acknowledgment}

T.A. wants to thank C.H. Brans and H. Ros\'e for numerous discussions
over the years about the relation of exotic smoothness to physics.
J.K. benefited much from the explanations given to him by Robert Gompf
regarding 4-smoothness several years ago, and discussions with Jan
S{\l{}}adkowski.

%\bibliographystyle{plain}
%\bibliography{diffbib,foliation-gerbes,knots}

\end{document}